\newtheorem{observation}[theorem]{Observation}
\newcommand{\algo}[1]{\ensuremath{\mbox{\textsc{#1}}}\xspace}
\newcommand{\alg}{\algo{Alg}}
\newcommand{\opt}{\algo{opt}}
\newcommand{\optn}{\algo{opt-nice}}
\newcommand{\sol}{\algo{Sol}}
\newcommand{\aux}{\algo{aux}}
\newcommand{\pr}{\mbox{{\sc{gebp}}}}
\newcommand{\pra}{\mbox{{\sc{ebp}}}}
\newcommand{\prb}{\mbox{{\sc{ebp-ubs}}}}
\newcommand{\prc}{\mbox{{\sc{gebp-bpv}}}}
\newcommand{\R}{{\cal R}}
\newcommand{\eps}{\varepsilon}
\begin{document}

\title{Approximation schemes for the generalized extensible bin packing problem\thanks{This research was supported by a grant from the GIF, the German-Israeli Foundation for Scientific Research and Development (grant number I-1366-407.6/2016) and by a grant from the ISF, the Israel Science Foundation (grant number 308/18).}}

\author{Asaf Levin\inst{1}}
\institute{Faculty of
Industrial Engineering and Management, The Technion, 32000 Haifa,
Israel. \email{levinas@ie.technion.ac.il}. }\date{}

\maketitle


\begin{abstract} 
We present a new generalization of the extensible bin packing with unequal bin sizes problem.  In our generalization the cost of exceeding the bin size depends on the index of the bin and not only on the amount in which the size of the bin is exceeded.  This generalization does not satisfy the assumptions on the cost function that were used to present the existing polynomial time approximation scheme (PTAS) for the extensible bin packing with unequal bin sizes problem.  In this work, we show the existence of an efficient PTAS (EPTAS) for this new generalization and thus in particular we improve the earlier PTAS for the extensible bin packing with unequal bin sizes problem into an EPTAS.  Our new scheme is based on using the shifting technique followed by a solution of polynomial number of $n$-fold programming instances.  In addition, we present an asymptotic fully polynomial time approximation scheme (AFPTAS) for the related bin packing type variant of the problem.
\end{abstract}

\section{introduction}
We define the following load balancing on parallel machines problem that we name the {\em generalized extensible bin packing problem} (\pr).  The input consists of $n$ jobs, where job $j$ has size $p_j\geq 0$, there are $m$ machines where for all $i$, machine $i$ is associated with three positive input numbers $f_i,c_i,\sigma_i$, such that the following assumption holds: 
\begin{equation}
f_i=c_i\cdot \sigma_i, \ \ \ \forall i .  \label{assump1}
\end{equation} 
Assigning the set of jobs $S_i$ to machine $i$, incurs a load on machine $i$ that is the total size of jobs in $S_i$.  That is, the load of machine $i$ that is assigned the set of jobs $S_i$ is $L_i=\sum_{j\in S_i} p_j$, and the cost of machine $i$ is 
\begin{equation*}
cost_i(L_i)=\left\{ \begin{array}{l l l}
f_i & \mbox{ \ \ \ } & \mbox{if } L_i\leq c_i\\
f_i+\sigma_i\cdot (L_i-c_i) & & \mbox {otherwise}
\end{array} \right .  \end{equation*}
The goal of \pr\ is to find a partition of the jobs to $m$ machines such that the total cost of the machines (in this solution) is minimized.  In this definition of the cost function of machine $i$, the value of $f_i$ is seen as a {\em fixed cost of machine $i$}, the value of $c_i$ is the {\em standard capacity of machine $i$}, and $\sigma_i$ is the {\em cost of extending the capacity of machine $i$ by one unit of overtime}.  The value of $\frac{1}{\sigma_i}$ captures the speed in which increasing the total size of jobs assigned to $i$ causes the cost of $i$ to increase by one unit.  This speed is similar to the roles of speeds in the environment of uniformly related machines that is widely studied in the scheduling literature.  In our study it will be easier to refer to the reciprocal of the speed (i.e., to the values of $\sigma_i$) and not to the speeds.

The extensible bin packing problem (\pra) is the special case of \pr\ where for every machine $i$ we have $f_i=c_i=\sigma_i =1$ (note that these values satisfy (\ref{assump1})).  Even this special case is strongly NP-hard via the standard reduction from 3-Partition. This extensible bin packing problem was suggested by \cite{speranza+,dell+}. 
Another special case of \pr\ that was considered before is the case of extensible bin packing with unequal bin sizes (\prb).  This \prb\ is defined as the special case of \pr\ where for every machine $i$, $\sigma_i=1$ and $f_i=c_i$ (once again  (\ref{assump1}) holds for such values).  Another interesting special case of \pr\ that generalizes \pra\ is the generalization from identical machines to uniformly related machines, that is, the special case of \pr\ where $f_i=1$ for all $i$.  Observe that this last case does not generalizes \prb\ and it is not a special case of \prb.  Our new model is defined in order to generalizes all these special cases.

Before we state our main result and present the literature, we define the notion of
approximation algorithms and the different types of approximation
schemes.  An $\R$-approximation algorithm for a minimization
problem is a polynomial time algorithm that always finds a
feasible solution of cost at most $\R$ times the cost of an
optimal solution. The infimum value of $\R$ for which an algorithm
is an $\R$-approximation is called the approximation ratio of the algorithm. A polynomial time
approximation scheme (PTAS) is a family of approximation
algorithms such that the family has a $(1+\eps)$-approximation
algorithm for any $\eps >0$. An efficient polynomial time
approximation scheme (EPTAS) is a PTAS whose time complexity is of
the form $f(\frac{1}{\eps}) \cdot poly(n)$ where $f$ is some (not
necessarily polynomial) computable function, and $poly(n)$ is a polynomial
function of the length of the (binary) encoding of the input. A
fully polynomial time approximation scheme (FPTAS) is a stronger
concept, defined like an EPTAS, but the function $f$ must be a
polynomial in $\frac 1{\eps}$. When we consider an
EPTAS we say that an algorithm (for some problem) has a
{\it polynomial running time complexity} if its time complexity is of
the form $f(\frac{1}{\eps}) \cdot poly(n)$. Note that while a PTAS
may have time complexity of the form $n^{g(\frac{1}{\eps})}$,
where $g$ can be polynomial or even super-exponential, this cannot be the
case for an EPTAS.  The notion of an EPTAS is modern and finds its
roots in the FPT (fixed parameter tractable) literature (see
\cite{CT97,DF99,FG06,Marx08}). It was introduced in order to
distinguish practical from impractical running times of PTAS's,
for cases where a fully polynomial time approximation scheme
(FPTAS) does not exist (unless P=NP). In this work, we design an
EPTAS for \pr\  for which an FPTAS does not exist unless P=NP as \pr\ is strongly NP-hard.

 In \cite{speranza+} Speranza and Tuza analyzed an online variant of \pra\ and considered the list scheduling heuristic showing that it is a $5/4$-approximation while a slightly improved algorithm is suggested whose approximation ratio is $1.228$ and a lower bound of $7/6$ is established for this online variant.  In \cite{dell+} Dell'Olmo et al.~showed that the longest processing time heuristic is a $13/12$-approximation for \pra.  The EPTAS of Alon et al. \cite{AAWY97,AAWY98} for load balancing on identical machines solves \pra\ and thus this special case admits an EPTAS prior to this work.  The time complexity of this EPTAS for \pra\ (among other problems on identical machines) was improved in the work of Jansen, Klein, and Verschae \cite{JKV16}. The online problem was studied further in  \cite{ye2003line1}.  
See also \cite{denton2010optimal,berg2017fast,sagnol2018price} for a study of this special case in the stochastic settings in the context of scheduling operating rooms, and \cite{sirdey2007combinatorial} for a use of the approximation algorithms for this problem in PCM interface management arising in wireless switch design.

The study of \prb\ was initiated by Dell'Olmo and Speranza \cite{dell1999} who showed that the approximation ratio of the longest processing time heuristic is $4-2\sqrt{2}$ and that the approximation ratio of the online algorithm list scheduling is exactly $\frac 54$.  They also showed that any online algorithm has an approximation ratio of at least $\frac 76$. 
The PTAS of Epstein and Tassa \cite{ET06} for vector scheduling in asymmetric settings gives a PTAS for \prb.  Their assumption that the cost functions of the machines have a common constant upper bound on the Lipschitz constants cannot be met for \pr\ as this means that the maximum ratio between the costs of extending a pair of machines by one unit of overtime is bounded by a constant (i.e., their scheme assumes that $\frac{\sigma_i}{\sigma_{i'}}$ is bounded by a constant  independent of the pair of machines $(i,i')$).  The online problem of \prb\ was also studied in \cite{ye2003line} who analyzed the performance guarantee of list scheduling as a function of the standard capacities of the machines and present an improved online algorithm for the cases $m=2,3$.

Thus, with respect to the existence of approximation schemes, \pra\ was known to admit an EPTAS while \prb\ was known to have a PTAS (that is not an EPTAS).  The approximability of \pr\ as well as its special case of $f_i=1$ for all $i$ were not studied before.

Our main result is an EPTAS for \pr.  In particular, we improve upon the scheme of \cite{ET06} for \prb\ and present the first EPTAS for this (previously studied) special case.  Our scheme first apply preprocessing steps and then breaks the asymmetry between the machines in a two steps approach.  In the first step, we use the machinery of the {\em shifting technique} in order to partition the instance into polynomially many sub-instances each of which has the additional property that the standard capacities of the machines in the sub-instance are similar.  The resulting sub-instance still captures unbounded asymmetry between the machines, and in order to tackle the sub-instances we use the recent algorithms for $n$-fold programming.  We refer to \cite{JKMR19} for an earlier EPTAS for a different scheduling problem that is based on solving an $n$-fold programming instance. The time complexity of our scheme is a single exponential function of $\frac 1{\eps}$ times a polynomial of $n,m$.

We conclude this work in Section \ref{prcsec} by showing the existence of an asymptotic fully polynomial time approximation scheme for a related bin packing type variant of the problem similarly to the variant of \pra\ studied by \cite{coffman2001,coffman2006}.   In this variant of \pr\ the number of machines of each type is not part of the input and is determined as part of the solution.  Namely, for every $i=1,2,\ldots ,m$ we first decide how many machines of type $i$ with a fixed cost $f_i$, a standard capacity $c_i$, and the cost $\sigma_i$ for overtime, to have where for all $i$, $f_i=c_i \cdot \sigma_i$.  In a second stage we find a feasible allocation of the jobs to the machines we have (according to the decisions made in the first stage). We denote this variant \prc. In Section \ref{prcsec} we establish the existence of an asymptotic fully polynomial time approximation scheme (AFPTAS) for \prc.  We note that the special case of one type of machines with $f_1=c_1=\sigma_1=1$ was considered by Coffman and Lueker \cite{coffman2001,coffman2006} who presented an AFPTAS for this special case of \prc.

\paragraph{Paper outline.}  We present our EPTAS for \pr\ in the main part of the paper.  This exposition is partitioned into preprocessing steps and characterization of near optimal solutions in Section \ref{initialsec}, followed by an analysis of the shifting technique when applied to \pr\ in Section \ref{shiftsec}, and finally the use of the $n$-fold programming algorithm to solve the family of sub-instances resulting from the shifting step is described in Section  \ref{nfoldsec}.  We establish the existence of an AFPTAS for \prc\ in Section  \ref{prcsec}.

\section{Preprocessing steps and the structure of a near optimal solution\label{initialsec}}
We assume that $\eps>0$ satisfies that $\frac 1{\eps}$ is integer.  We use the fact that in order to establish the existence of an EPTAS for \pr, it suffices to establish for some integer constant $z$, a $(1+\eps)^{z}$-approximation algorithm whose time complexity is upper bounded by the product of a computable function of $\frac{1}{\eps}$ and a polynomial of the input length.  When we state time complexity of steps in our algorithm we ignore polynomial factors of $\frac 1{\eps}$.

Our preprocessing steps consists of scaling and rounding of the input parameters.
Our goal is to assume that job sizes are rounded and that the minimum value of $\sigma_i$ is $1$.

First, we consider the scaling of the parameters that allows us to assume that $\min_i \sigma_i =1$.  That is, we prove the following lemma.
\begin{lemma}\label{sigma-min}
Without loss of generality, $\min_i \sigma_i=1$.
\end{lemma}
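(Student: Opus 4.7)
The plan is to show that uniformly rescaling the cost function by a multiplicative constant produces an equivalent instance in which $\min_i \sigma_i = 1$, so an approximation algorithm for instances satisfying $\min_i \sigma_i = 1$ yields one for the general case without any loss in the approximation ratio.

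Concretely, let $\sigma^{*} = \min_i \sigma_i > 0$ and define a new instance on the same job set and machine set by setting $\sigma_i' = \sigma_i/\sigma^{*}$, $f_i' = f_i/\sigma^{*}$, and $c_i' = c_i$ for every machine $i$. The first verification is that the structural assumption (\ref{assump1}) is preserved: indeed, $c_i' \cdot \sigma_i' = c_i \cdot \sigma_i/\sigma^{*} = f_i/\sigma^{*} = f_i'$. The second verification is that by construction $\sigma_i' \geq 1$ for all $i$ with equality attained at the index realizing $\sigma^{*}$, so $\min_i \sigma_i' = 1$.

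Next I would check that the cost functions in the two instances differ by the global multiplicative factor $1/\sigma^{*}$. For any assignment inducing load $L_i$ on machine $i$, if $L_i \leq c_i = c_i'$ the new cost is $f_i' = f_i/\sigma^{*}$, and otherwise the new cost is $f_i' + \sigma_i'(L_i - c_i') = (f_i + \sigma_i(L_i - c_i))/\sigma^{*}$. In both cases the new per-machine cost equals the old one divided by $\sigma^{*}$, and summing over machines the total cost of any feasible assignment is scaled by $1/\sigma^{*}$. In particular, the set of optimal assignments coincides and any $\rho$-approximate solution for one instance is a $\rho$-approximate solution for the other.

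There is essentially no obstacle here; the only thing to keep track of is that the rescaling acts consistently on all three parameters $f_i, c_i, \sigma_i$ so that the fixed-cost/overtime decomposition and the assumption $f_i = c_i \sigma_i$ remain intact. Because the transformation is a global multiplicative scaling of the objective, it commutes with approximation guarantees, which justifies the ``without loss of generality'' in the statement of the lemma.
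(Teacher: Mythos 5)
Your proof is correct, but it normalizes differently from the paper. You rescale the cost parameters, dividing $f_i$ and $\sigma_i$ by $\sigma^{*}=\min_i\sigma_i$ while leaving $c_i$ and the job sizes untouched, so every solution's cost is multiplied by the global factor $1/\sigma^{*}$; you then invoke the (correct) fact that a uniform scaling of the objective preserves approximation ratios. The paper instead multiplies $c_i$ and all job sizes $p_j$ by $\sigma^{*}$ and divides $\sigma_i$ by $\sigma^{*}$, leaving $f_i$ unchanged; with that choice the cost of every solution is \emph{identical} in the two instances (the events $L_i\le c_i$ and the products $\sigma_i(L_i-c_i)$ are unchanged), so no appeal to scale-invariance of the approximation ratio is needed. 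Both transformations preserve assumption \eqref{assump1} and both prove the lemma, so your argument buys a slightly shorter computation at the price of a scaled objective. One consequence worth noting: the paper later reuses this lemma in the proof of Lemma~\ref{fmaxlem} for \prc, where it first normalizes $\max_i f_i=1$ and then explicitly relies on the fact that the transformation of Lemma~\ref{sigma-min} does not alter the fixed costs $f_i$, so that both normalizations hold simultaneously. Your transformation rescales $f_i$, so it could not be plugged into that later argument as is; you would have to reorder the normalizations or switch to the paper's job-size scaling there. For the lemma as stated, however, your proof is complete and sound.
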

\begin{proof}
Assume that the input of \pr\ does not satisfy the claim.  Then we let $\sigma=\min_i \sigma_i$, and we do the following.  For every machine $i$ we multiply $c_i$ by $\sigma$, and we divide $\sigma_i$ by $\sigma$.  In addition, for every job $j$, we multiply the job size $p_j$ by $\sigma$.  Observe that the new input satisfies $f_i=c_i\cdot \sigma_i$ for all $i$.

Next, we show that for every solution for the original input, the cost of the solution in the new input is the same as it was in the original input.  To see this fact, note that for every machine $i$, its load, i.e., the value of $L_i$ is $\sigma$ times its value in the solution for the original input, and thus it satisfies $L_i\leq c_i$ in the new input if and only if it is satisfied in the original input.  Furthermore, the value of $L_i-c_i$ in the new input is exactly $\sigma$ times its value in the original input, and thus the cost of machine $i$ is the same in the two inputs.
\qed \end{proof}

Next, without loss of generality, we assume that  machines are sorted according to their  standard capacities, that is, we assume $c_1\geq c_2 \geq \cdots \geq c_m$.

Throughout this work we use the following observation.
\begin{observation}\label{obs1}
Let $x,y$ be two numbers such that $x\leq y\leq (1+\eps)x$ and let $i$ be a machine, then $cost_i(x)\leq cost_i(y) \leq (1+\eps)\cdot cost_i(x)$.
\end{observation}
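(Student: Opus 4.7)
The plan is to prove the two inequalities separately. The lower bound $\text{cost}_i(x) \leq \text{cost}_i(y)$ is just the monotonicity of $\text{cost}_i$, which is immediate from the definition: $\text{cost}_i$ equals the constant $f_i$ on $[0,c_i]$ and then the linear (hence increasing) function $f_i + \sigma_i(L_i - c_i)$ on $(c_i,\infty)$, and the two pieces agree at $c_i$.

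For the upper bound $\text{cost}_i(y) \leq (1+\eps)\,\text{cost}_i(x)$, I would split into three cases based on where $x$ and $y$ sit relative to $c_i$. Case A: $y \leq c_i$. Then also $x \leq c_i$, so both costs equal $f_i$ and the inequality is trivial. Case B: $x \leq c_i < y$. Here $\text{cost}_i(x) = f_i$, and from $y \leq (1+\eps)x \leq (1+\eps)c_i$ one gets $y - c_i \leq \eps c_i$, hence $\text{cost}_i(y) = f_i + \sigma_i(y-c_i) \leq f_i + \eps \sigma_i c_i = f_i + \eps f_i = (1+\eps)f_i$, where the key step invokes assumption (\ref{assump1}) that $f_i = c_i \sigma_i$. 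Case C: $c_i < x$. Then both values are in the linear regime, and a direct computation gives
\begin{equation*}
(1+\eps)\,\text{cost}_i(x) - \text{cost}_i(y) = \eps f_i + \sigma_i\bigl[(1+\eps)x - y\bigr] - \eps \sigma_i c_i = \sigma_i\bigl[(1+\eps)x - y\bigr] \geq 0,
\end{equation*}
where again $f_i = c_i \sigma_i$ is used to cancel the $\eps f_i$ and $\eps \sigma_i c_i$ terms, and $(1+\eps)x - y \geq 0$ comes from the hypothesis $y \leq (1+\eps)x$.

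The step that really uses the structure of the problem is Case B (and the cancellation in Case C): without the identity $f_i = c_i \sigma_i$ imposed by (\ref{assump1}), the slope $\sigma_i$ of the overtime part could be huge compared to $f_i/c_i$, and a small relative increase of $L_i$ across the threshold $c_i$ could blow up the cost by an unbounded factor. So the only genuine content of the proof is to check that assumption (\ref{assump1}) is exactly what makes the cost function $(1+\eps)$-Lipschitz in the multiplicative sense, and the three-case split above is straightforward.
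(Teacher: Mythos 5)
Your proof is correct and uses essentially the same ingredients as the paper's: monotonicity for the lower bound, and the identity $f_i = c_i\sigma_i$ from (\ref{assump1}) together with a case split around the threshold $c_i$ for the upper bound. The only difference is cosmetic: the paper collapses your Cases B and C into a single case $y \geq c_i$ by noting that $cost_i(x) \geq \sigma_i x$ holds for \emph{all} $x$ (equality when $x\geq c_i$, and $cost_i(x)=f_i=\sigma_i c_i > \sigma_i x$ when $x<c_i$), so $cost_i(x) \geq \sigma_i x \geq \sigma_i \frac{y}{1+\eps} = \frac{1}{1+\eps}cost_i(y)$ in one line; your three-case version is equally valid, just slightly longer.
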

\begin{proof}
The inequality $cost_i(x)\leq cost_i(y)$ holds by the fact that the cost function $cost_i$ is monotone non-decreasing (as $\sigma_i$ is at least $1$, and thus non-negative).  The inequality $cost_i(y) \leq (1+\eps)\cdot cost_i(x)$ holds by the following argument.  If $y<c_i$, then $x,y<c_i$ and we have $cost_i(x)=cost_i(y)$ and the inequality holds.  Otherwise, $y\geq c_i$ and by the assumption $f_i=c_i\cdot \sigma_i$, we conclude $cost_i(x) \geq \sigma_i \cdot x$ and so $cost_i(x)\geq \sigma_i \cdot x \geq \sigma_i \cdot \frac{y}{1+\eps}  = \frac{1}{1+\eps} \cdot cost_i(y)$ establishing the required inequality.
\qed
\end{proof}

Next, we consider the rounding of the jobs sizes and we use the following rounding method.  This rounding method is motivated by the fact that the $n$-fold programming formulations which we use to solve sub-instances of the rounded problem later on, assume that all coefficients of the constraint matrix are (relatively small) integers.  Thus, for every job $j$, we let $\tau(j)$ be the integer value such that $p_j \in \left[ \frac{1}{\eps^{\tau(j)}}, \frac{1}{\eps^{\tau(j)+1}} \right)$.  We let the rounded value of $p_j$ be 
\begin{equation*}
p'_j=\left\lceil \frac{p_j}{(1/\eps)^{\tau(j)-1}} \right\rceil \cdot \left( (1/\eps)^{\tau(j)-1} \right) . 
\end{equation*}
The rounded instance $I'$ is the instance of \pr\ in which the values of the input parameters are $c_i,\sigma_i,f_i$ for all $i$ (such that $\min_i \sigma_i=1$), and $p'_j$ for all $j$.
In the sequel, we use the fact that in $I'$, for every integer value of $\tau$ there are at most $1/\eps^2$ distinct rounded job sizes in the interval $\left[ \frac{1}{\eps^{\tau}}, \frac{1}{\eps^{\tau+1}} \right)$.  

For a solution \sol\ and an instance $\hat{I}$ of \pr, we denote by $cost(\sol,\hat{I})$ its objective function value where the input parameters are according to $\hat{I}$  (in particular we use this notation for $\hat{I}=I$ and for $\hat{I}=I'$).  Next we analyze the impact of the rounding step on the performance guarantee of our algorithm. 
\begin{lemma}
Let \sol\ be a feasible solution.  Then, $cost(\sol,I)\leq cost(\sol,I')\leq (1+\eps)\cdot cost(\sol,I)$.
\end{lemma}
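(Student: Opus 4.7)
The plan is to lift the per-job distortion introduced by the rounding to a per-machine distortion of load, and then feed each machine's loads through Observation \ref{obs1} to get a per-machine cost bound, which will sum to the claim.

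\medskip

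\noindent\textbf{Step 1: Control the rounding error of a single job.} First I would verify that $p_j \le p'_j \le (1+\eps)\, p_j$ for every job $j$. The lower bound is immediate from the ceiling in the definition of $p'_j$. For the upper bound, the ceiling increases $p_j$ by strictly less than one unit of $(1/\eps)^{\tau(j)-1}$, so
\[
p'_j \;<\; p_j + (1/\eps)^{\tau(j)-1} \;=\; p_j + \eps \cdot (1/\eps)^{\tau(j)} \;\le\; p_j + \eps \cdot p_j \;=\; (1+\eps)\, p_j ,
\]
using the fact that $p_j \ge (1/\eps)^{\tau(j)}$ by the definition of $\tau(j)$.

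\medskip

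\noindent\textbf{Step 2: Lift to per-machine loads.} Fix the assignment $S_i$ for machine $i$ given by \sol, and let $L_i = \sum_{j\in S_i} p_j$ (load in $I$) and $L'_i = \sum_{j\in S_i} p'_j$ (load in $I'$). Summing the inequalities from Step 1 over $j\in S_i$ yields
\[
L_i \;\le\; L'_i \;\le\; (1+\eps)\, L_i .
\]

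\medskip

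\noindent\textbf{Step 3: Apply Observation \ref{obs1} per machine and sum.} Since both instances share the same machine parameters $c_i, \sigma_i, f_i$, the cost $cost_i$ is the same function in $I$ and in $I'$. Observation \ref{obs1} applied with $x=L_i$ and $y=L'_i$ (which is permitted by Step 2) gives
\[
cost_i(L_i) \;\le\; cost_i(L'_i) \;\le\; (1+\eps)\cdot cost_i(L_i) .
\]
Summing over $i=1,\dots,m$ gives $cost(\sol, I) \le cost(\sol, I') \le (1+\eps)\cdot cost(\sol, I)$, as required.

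\medskip

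\noindent There is no real obstacle here: the only point to be careful about is the algebra in Step 1 (in particular the use of $p_j \ge (1/\eps)^{\tau(j)}$ to absorb the additive rounding term into a multiplicative $(1+\eps)$ factor), and the observation in Step 3 that Observation \ref{obs1} is tailor-made for exactly this situation because the assumption $f_i = c_i \sigma_i$ is already baked into its proof.
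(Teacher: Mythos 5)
Your proof is correct and follows essentially the same route as the paper's: establish $p_j \le p'_j \le (1+\eps)p_j$ per job, sum over each machine's assignment to get the same sandwich on loads, and then invoke Observation \ref{obs1} machine by machine before summing. The algebra in Step 1 (using $p_j \ge (1/\eps)^{\tau(j)}$ to absorb the additive ceiling error) matches the paper's argument exactly, just phrased with $p_j$ instead of $\eps p_j$.
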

\begin{proof}
For every job $j$, we have $p'_j \geq p_j$ as we round up the size of $j$. For every $j$, we have $p'_j \leq (1+\eps)p_j$ as $\eps p_j \geq (1/\eps)^{\tau(j)-1}$ by definition of $\tau(j)$, and when we round up $p_j$ we increase its size by at most $(1/\eps)^{\tau(j)-1}$, that is, we have $$ p'_j \leq \left( \frac{p_j}{(1/\eps)^{\tau(j)-1}} +1 \right) \cdot \left( (1/\eps)^{\tau(j)-1} \right)  \leq p_j + \left( (1/\eps)^{\tau(j)-1} \right) \leq (1+\eps)p_j$$ as we argued.

For every machine $i$, the total size of jobs assigned to $i$ in $\sol$ as a solution  to $I$ is at most the total size of jobs assigned to $i$ in $\sol$ as a solution to $I'$ and this is at most $(1+\eps)$ times the total size of jobs assigned to $i$ as a solution to $I$.  The claim follows by Observation \ref{obs1} and summing the costs of all machines.
\qed
\end{proof}
In what follows, we assume that the original input of \pr\ satisfies the assumption of Lemma \ref{sigma-min} and the job sizes are already rounded as they are in $I'$.  With a slight abuse of notation, we let $f_i,c_i,\sigma_i$ be the parameters of machine $i$ and $p_j$ be the size of job $j$ (for all $i,j$) in this input which we denote by $I$.  It is sufficient to provide an EPTAS for $I$, and this would imply an EPTAS for the original input.

Next, we characterize near optimal solutions.  We let $h$ be an index of a machine for which $\sigma_h=1$.  
\begin{definition}
A solution \sol\ for \pr\ is called a {\em nice solution} if for every machine $i\neq h$, the total size of jobs assigned to $i$ (in \sol) is at most $\frac{c_i}{\eps}$.
\end{definition}
The proof of the following lemma uses the observation that since $\sigma_h=1$, adding a set of jobs of total size $x$ to machine $h$ increases the cost of $h$ by at most $x$.

\begin{lemma}
Let \opt\ be an optimal solution (for the rounded instance) whose cost is denoted as $cost(\opt)$.  Then, there is a nice solution $\sol_{nice}$ whose cost $cost(\sol_{nice})$ is at most $(1+\eps)\cdot cost(\opt)$. 
\end{lemma}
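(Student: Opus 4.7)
My plan is to modify $\opt$ into a nice solution by transferring excess jobs onto machine $h$, exploiting the fact that $\sigma_h=1$ makes $h$ a cheap destination. Explicitly, for each machine $i\neq h$ whose load $L_i$ in $\opt$ exceeds $c_i/\eps$, I would move jobs from $i$ onto $h$ one at a time (in any order) until the new load $L'_i$ is at most $c_i/\eps$; all other machines keep their assignment. This produces a nice solution by construction, so the only work is bounding the cost. Let $M_i=L_i-L'_i$ denote the total size moved off machine $i$ and set $M=\sum_i M_i$. By the hinted property of $h$, $cost_h(L_h+M)\leq cost_h(L_h)+M$, so the cost increase on $h$ is at most $\sum_i M_i$.

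The savings on each modified machine $i$ split into two cases. In the easy case $L'_i\geq c_i$, both loads lie in the overtime region, so $cost_i(L_i)-cost_i(L'_i)=\sigma_i M_i$; since $\sigma_i\geq 1$ by Lemma~\ref{sigma-min}, the net contribution of such a machine to the total cost change is at most $M_i(1-\sigma_i)\leq 0$, and these machines can be safely ignored. The delicate case is when the last transferred job pushes $L'_i$ below $c_i$: then the savings on $i$ are only $\sigma_i(L_i-c_i)$, and the net contribution of machine $i$ is bounded by
\begin{equation*}
M_i-\sigma_i(L_i-c_i)\leq L_i-\sigma_i(L_i-c_i)=L_i(1-\sigma_i)+f_i\leq f_i,
\end{equation*}
where I use $M_i\leq L_i$ together with $\sigma_i\geq 1$.

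To finish, I would observe that every delicate machine satisfied $L_i>c_i/\eps$ in $\opt$, so $cost_i(L_i)=\sigma_i L_i>\sigma_i c_i/\eps = f_i/\eps$, i.e.\ $f_i<\eps\cdot cost_i(L_i)$. Summing over all delicate machines, the total cost increase of the modification is at most $\sum f_i<\eps\sum cost_i(L_i)\leq \eps\cdot cost(\opt)$, which yields $cost(\sol_{nice})\leq(1+\eps)\cdot cost(\opt)$ as required. The main obstacle in this plan is precisely the delicate case, where a single large job overshoots $c_i$ and we lose a full $f_i$ of savings; the observation that rescues us is that being overloaded ($L_i>c_i/\eps$) forces $\opt$ itself to pay more than $f_i/\eps$ on machine $i$, which converts the per-machine slack $f_i$ into an $\eps$-fraction of the optimum.
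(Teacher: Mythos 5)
Your proof is correct. You and the paper share the same key observation — that any machine with $L_i > c_i/\eps$ satisfies $cost_i(L_i)=\sigma_i L_i > \sigma_i c_i/\eps = f_i/\eps$, so losing an additive $f_i$ per overloaded machine is an $\eps$-fraction of $cost(\opt)$ — and the same mechanism of dumping jobs on $h$ where the marginal cost is at most $1$ per unit of size. The paper takes a blunter route than you do: it simply empties \emph{all} jobs from every overloaded machine $i\in S$ onto $h$ (rather than moving jobs one at a time until $L'_i\leq c_i/\eps$), which avoids your two-case split entirely. The paper's analysis then bounds the new cost directly as $\sum_{i\notin S}cost_i(L_i)+\sigma_h\sum_{i\in S}L_i+\sum_{i\in S}f_i$ and uses $\sigma_h=1\leq\sigma_i$ to absorb $\sigma_h L_i$ into $cost_i(L_i)$, leaving an excess of $\sum_{i\in S}f_i$. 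Your refinement (moving only the surplus and splitting on whether the final load crosses $c_i$) moves less mass to $h$ but buys nothing extra for this bound; the paper's cruder modification already achieves $(1+\eps)$, so you could simplify by emptying the overloaded machines outright and skipping the ``easy case'' versus ``delicate case'' dichotomy.
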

\begin{proof}
Consider the solution \opt.  Let $S$ be the set of machines $i$ such that $i\neq h$ and \opt\ assigns to $i$ a set of jobs $O_i$ of total size larger than $\frac{c_i}{\eps}$.  We create the solution $\sol_{nice}$ by changing the assignment of jobs in $\cup_{i\in S} O_i$ so that these jobs are assigned to $h$, while the assignment of other jobs is the same as in \opt.  That is, we reassign the jobs that were assigned to machine in $S$ so that they are assigned to $h$.  Observe that the new solution $\sol_{nice}$ is indeed a nice solution.
 
Next, we upper bound the cost of the new solution.  Let $L_i$ be the total size of jobs assigned by \opt\ to machine $i$ (for all $i$).  We have the following.
\begin{eqnarray*}
cost(\sol_{nice}) &\leq & \sum_{i\notin S} cost_i(L_i) +\sigma_h \cdot \sum_{i\in S} L_i + \sum_{i\in S} f_i \\
&\leq & \sum_{i\notin S} cost_i(L_i) + \sum_{i\in S} (L_i\sigma_i + f_i) \\
&=   &  \sum_{i} cost_i(L_i) + \sum_{i\in S} f_i
\end{eqnarray*}
however, $\sum_{i} cost_i(L_i) = cost(\opt)$, and thus it suffices to show that for every $i\in S$, we have 
$f_i \leq  \eps \cdot cost_i(L_i) $.  This follows as $f_i=c_i\cdot \sigma_i \leq \eps \cdot L_i \cdot \sigma_i = \eps \cdot cost_i(L_i)$ where the inequality holds as $i\in S$ and the last equation holds by the definition of $cost_i$ as $L_i\geq c_i$.
\qed \end{proof}

We let $\optn$ be an optimal solution among all nice solutions for the rounded instance whose cost is $cost(\optn)$.  In what follows we show an algorithm that returns a solution \sol\ whose cost is at most $(1+\eps) ^4(1+3\eps)\cdot cost(\optn)$ and its time complexity is $T'(n,m,\frac{1}{\eps})$.  This solution \sol\ is a feasible solution to the original instance of \pr\ that is obtained in time $T'(n,m,\frac{1}{\eps})+n+m$ (as the rounding takes $O(n)$ and the scaling takes $O(n+m)$) whose approximation ratio as an approximation algorithm for the original instance of \pr\ is $(1+\eps)^6 \cdot (1+3\eps)$.

\section{Using the shifting technique to obtain a family of instances with machines with similar standard capacities\label{shiftsec}}
We use the shifting technique \cite{HM85,HocBook} to partition the rounded instance into a family of problems that can be solved almost independently.
For every value of $t=0,1,\ldots, \frac{1}{\eps}-1$ we let 
\begin{equation*}
M(t) = \{ i\in \{1,2,\ldots ,m\} : \lceil \log_{1/\eps^2} c_i \rceil \equiv t  \mod \frac{1}{\eps} \} ,
\end{equation*}
and we let $t_{min}$ be an index such that \begin{equation*} t_{min} \in \arg\min_t \sum_{i\in M(t)} f_i . \end{equation*}
Observe that the value $t_{min}$ is easily computed in $O(m)$ time.

Recall that $\optn$ is the best solution for the rounded instance whose cost is $cost(\optn)$.  Let $\opt'$ be the best solution among the nice solutions which allocate no job to machines in 
$M(t_{min}) \setminus \{ h\}$.  Then, we next show that the cost $cost(\opt')$ of $\opt'$ is close to $cost(\optn)$.

\begin{lemma}
We have $cost(\optn) \leq cost(\opt') \leq (1+\eps)\cdot cost(\optn)$.
\end{lemma}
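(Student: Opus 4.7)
The plan splits into the two inequalities. The left one, $cost(\optn)\leq cost(\opt')$, is immediate: $\opt'$ optimizes over a strict subclass of the nice solutions considered by $\optn$, namely those that place no job on $M(t_{min})\setminus\{h\}$. Hence $\optn$ is at least as good as $\opt'$.

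For the right inequality I will exhibit a concrete nice solution $T$ that avoids $S:=M(t_{min})\setminus\{h\}$ and whose cost is within a factor $(1+\eps)$ of $cost(\optn)$. Starting from $\optn$, I reassign every job that $\optn$ places on a machine in $S$ to machine $h$, leaving all other assignments unchanged. The resulting $T$ is nice: machines in $S$ now carry load $0$; machines outside $M(t_{min})\cup\{h\}$ keep their (nice) $\optn$ loads; and the nice condition is vacuous at $h$. By construction $T$ places no job on $S$, so $cost(\opt')\leq cost(T)$.

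The cost comparison rests on two simple facts. First, because $\sigma_h=1$, raising the load of $h$ by a total amount $\Delta:=\sum_{i\in S}L_i$ (where $L_i$ is the $\optn$-load at $i$) increases $cost_h$ by at most $\Delta$, verified by splitting into the three subcases of $L_h$ and $L_h+\Delta$ relative to $c_h$. Second, $\sigma_i\geq 1$ (Lemma~\ref{sigma-min}) together with $f_i=c_i\sigma_i$ gives $cost_i(L_i)\geq L_i$ for every machine: if $L_i\leq c_i$ then $cost_i(L_i)=c_i\sigma_i\geq c_i\geq L_i$, while if $L_i>c_i$ then $cost_i(L_i)=\sigma_iL_i\geq L_i$. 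Using that each $i\in S$ now contributes only $f_i$,
\begin{equation*}
cost(T)-cost(\optn) \leq \Delta + \sum_{i\in S}\bigl(f_i-cost_i(L_i)\bigr) \leq \Delta - \sum_{i\in S}L_i + \sum_{i\in S}f_i = \sum_{i\in S}f_i.
\end{equation*}

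The remaining step is a pigeonhole argument driven by the definition of $t_{min}$. Since the sets $M(0),M(1),\ldots,M(1/\eps-1)$ partition $\{1,\ldots,m\}$ into $1/\eps$ classes and $t_{min}$ attains the minimum of $\sum_{i\in M(t)}f_i$, averaging gives $\sum_{i\in S}f_i\leq\sum_{i\in M(t_{min})}f_i\leq \eps\sum_i f_i$; and $\sum_i f_i\leq cost(\optn)$ because every machine always incurs at least its fixed cost. Therefore $cost(\opt')\leq cost(T)\leq(1+\eps)cost(\optn)$, closing the lemma. The one spot that needs a bit of care is the cost-comparison bookkeeping above; the niceness check and the averaging step are otherwise routine.
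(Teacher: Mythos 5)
Your proof is correct and follows essentially the same route as the paper: move the jobs placed on $M(t_{min})\setminus\{h\}$ to machine $h$, observe that niceness is preserved, use $\sigma_h=1$ together with $cost_i(L_i)\geq\sigma_i L_i\geq L_i$ to bound the cost increase by $\sum_{i\in S}f_i$, and finish via the pigeonhole/averaging choice of $t_{min}$ and $\sum_i f_i\leq cost(\optn)$. The only difference is presentational: you isolate the inequality $cost_i(L_i)\geq L_i$ as an explicit lemma and carry out the bookkeeping $cost(T)-cost(\optn)\leq\Delta+\sum_{i\in S}(f_i-cost_i(L_i))$ step by step, whereas the paper compresses this into a single justified inequality.
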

\begin{proof}
The first inequality holds by definition.  We prove the second inequality by establishing a nice solution $\sol$ which allocates no job to machines in 
$M(t_{min}) \setminus \{ h\}$ and we bound its cost by $(1+\eps) \cdot cost(\optn)$.
Consider \optn\ and let $J$ be the set of jobs which \optn\ assigns to machines in $M(t_{min}) \setminus \{ h\}$.
In order to construct \sol\ we modify $\optn$ by changing the allocation of $J$ and we assign these jobs to machine $h$ (and recall that $\sigma_h=1$).  Clearly by moving jobs from machines to machine $h$ the property of nice solutions cannot be hurt as the load in \sol\ of every machine  which is not $h$ is not larger than it was in \optn. Furthermore \sol\ does not allocate jobs to machines in $M(t_{min}) \setminus \{ h\}$.  Last, the cost of $\sol$ denoted as $cost(\sol)$ satisfies 
\begin{eqnarray}
cost(\sol) &\leq & cost(\optn) + \sum_{i\in M(t_{min}) \setminus \{ h\}} f_i \label{ineq1}\\
&\leq & cost(\optn) + \eps \cdot \sum_{i=1}^m f_i \label{ineq2}\\
&\leq & (1+\eps) \cdot cost(\optn) , \label{ineq3}
\end{eqnarray}
where (\ref{ineq1}) holds as for every machine $i\in M(t_{min}) \setminus \{ h\}$ that $\optn$ assigns a total load of $x$ (for $x\geq 0$) its cost in $\optn$ was at least $\sigma_i \cdot x \geq x$ and  this extra load on machine $h$ increases the cost of $h$ by at most $x$, and thus the increase of the cost of the solution due to the reallocation of jobs of size $x$ to machine $h$ is at most $f_i$ (the cost of $i$ in \sol); inequality (\ref{ineq2}) holds by definition of $t_{min}$; and (\ref{ineq3}) follows by the definition of the objective function of \pr.   
\qed \end{proof}

In what follows, we enforce the algorithm to allocate no job to machines in $M(t_{min}) \setminus \{ h\}$.  Since $\opt'$ is an optimal nice solution subject to this additional constraint, we conclude that it suffices to construct a feasible solution \sol\ whose cost is approximately $cost(\opt')$.  Next, we delete the set of machines $M(t_{min}) \setminus \{ h\}$ from the instance.  This deletion of machines does not hurt the feasibility of \sol\ and of $\opt'$ (as these solutions do not allocate jobs to the deleted machines), however, it decreases the cost of both solutions by a common non-negative constant that is the total fixed cost of the deleted machines.  Thus, it suffices to show that we can design an EPTAS for the instance resulted from $I$ by deleting the machines in $M(t_{min}) \setminus \{ h\}$.  Once again, with a slight abuse of notation we assume that the instance $I$ is the instance resulted from this deletion of machines and denote by $\{ 1,2,\ldots ,m\}$ the set of machines in $I$  such that $c_1\geq c_2 \cdots \geq c_m$.

We partition the machine set of the instance $I$ resulting from the deletion of $M(t_{min}) \setminus \{ h\}$.  This partition is obtained by letting each partition $S$ be a maximal (with respect to inclusion) set of consecutive indices of machines such that there are no two consecutive indices $i,i+1\in S$ satisfying $\frac{c_i}{c_{i+1}} \geq \left( \frac{1}{\eps} \right)^{2}$.  We let $\kappa$ be the number of partitions in this partition $(S_1,\ldots ,S_{\kappa})$ such that for every $q$, and for every $i(q) \in S_q$ and $i(q+1)\in S_{q+1}$ we have $c_{i(q)}>c_{i(q+1)}$ and in fact we have   $\eps^2 \cdot c_{i(q)}\geq c_{i(q+1)}$, by the sorting of the machines. For $q=1,2,\ldots ,\kappa$, let $\ell(q)=\min \{ i: i\in S_q\}$ and $r(q)=\max \{ i : i\in S_q\}$ so the indices in $S_q$ are those between $\ell(q)$ and $r(q)$.  A crucial property for our algorithm is the following one.

\begin{lemma}\label{bounded-ratio}
For every $q=1,2,\ldots ,\kappa$, and every pair of machines $i,i'\in S_{q}$, we have 
\begin{equation*}
\frac{c_i}{c_{i'}} \leq \left( \frac{1}{\eps} \right)^{4/\eps} .
\end{equation*}
\end{lemma}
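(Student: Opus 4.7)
The plan is to combine two facts: the partition rule, which governs the ratio between consecutive capacities inside $S_q$, and the preceding deletion of $M(t_{min})\setminus\{h\}$, which forbids all but at most one machine of $S_q$ from having $\lceil \log_{1/\eps^2} c_i \rceil$ in the residue class $t_{min}$ modulo $1/\eps$. The composition of these two facts forces every maximal group $S_q$ to sit inside a capacity window of bounded width.

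Concretely, I would set $L_i := \lceil \log_{1/\eps^2} c_i \rceil$. From the partition rule and the sorting, consecutive indices $i,i+1\in S_q$ satisfy $1\leq c_i/c_{i+1}<(1/\eps)^2$, hence $0\leq \log_{1/\eps^2} c_i - \log_{1/\eps^2} c_{i+1} < 1$, which forces $L_i-L_{i+1}\in\{0,1\}$. Therefore, as $i$ runs from $\ell(q)$ to $r(q)$, the values $L_i$ form a non-increasing integer sequence whose drops are $0$ or $1$, and $\{L_i : i\in S_q\}$ fills the entire integer interval $[L_{r(q)},\,L_{\ell(q)}]$.

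Next, since $M(t_{min})\setminus\{h\}$ has been deleted, at most one index $i\in S_q$ (only $h$ itself) can satisfy $L_i\equiv t_{min}\pmod{1/\eps}$. Equivalently, the integer interval $[L_{r(q)},L_{\ell(q)}]$ contains at most one member of the arithmetic progression of step $1/\eps$ through $t_{min}$. Since successive terms of that progression are $1/\eps$ apart, any integer interval containing at most one of them has length at most $2/\eps - 1$, so $L_{\ell(q)} - L_{r(q)}\leq 2/\eps - 2$. Translating back to capacities, for any $i,i'\in S_q$ with $c_i\geq c_{i'}$ the definition of the ceiling gives $\log_{1/\eps^2} c_i\leq L_{\ell(q)}$ and $\log_{1/\eps^2} c_{i'}> L_{r(q)}-1$, whence $\log_{1/\eps^2}(c_i/c_{i'}) < L_{\ell(q)}-L_{r(q)}+1 \leq 2/\eps - 1$. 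Exponentiating yields $c_i/c_{i'} < (1/\eps^2)^{2/\eps-1} = (1/\eps)^{4/\eps-2} \leq (1/\eps)^{4/\eps}$.

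The only subtlety is remembering that $h$ is spared from deletion, so one representative of the forbidden residue class can actually appear inside $S_q$; missing this point would only yield the even stronger bound $(1/\eps)^{2/\eps-2}$, which still implies the lemma. Every other step is a routine manipulation of the inequalities defining $\lceil \log_{1/\eps^2} c_i\rceil$, so I do not expect any serious obstacle.
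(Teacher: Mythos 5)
Your proof is correct and gives a slightly stronger bound, namely $c_i/c_{i'} < (1/\eps)^{4/\eps-2}$. The underlying ingredients are the same as the paper's: the defining property of $S_q$ bounds consecutive capacity ratios, and the deletion of $M(t_{min})\setminus\{h\}$ removes an entire residue class of $\lceil \log_{1/\eps^2} c_i\rceil$ values (except possibly the one held by $h$). What differs is the organization. The paper argues by contradiction: it assumes the ratio is large, applies the pigeonhole principle to find two residues $x<y$ in the gap, picks $z\in\{x,y\}$ that is not $\lceil \log_{1/\eps^2} c_h\rceil$, and then exhibits an explicit split point $i''$ with $c_{i''}/c_{i''+1}>1/\eps^2$ inside $S_q$. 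You instead prove directly that the drops $L_i-L_{i+1}$ are in $\{0,1\}$, so $\{L_i:i\in S_q\}$ fills a contiguous integer interval, and then a one-line counting argument (at most one integer in this interval lies in the residue class $t_{min}$, hence the interval has at most $2/\eps-1$ integers) yields the bound without any contradiction or case analysis over the choice of $z$. Both routes pivot on the same structural facts; yours packages the argument as a clean direct lemma about the $L_i$ sequence, which avoids having to explicitly construct the breaking pair $(i'',i''+1)$. One small caution: the statement that the interval ``contains at most one member of the progression'' relies on your interval-filling observation, so it is worth keeping that step explicit in a writeup, as you did — without it the claim would not follow from ``at most one index $i\in S_q$ has $L_i\equiv t_{min}$'' alone.
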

\begin{proof}
Assume by contradiction that the claim does not hold for $i,i'\in S_q$.  Then, $\frac{c_i}{c_{i'}} > \left( \frac{1}{\eps^2} \right)^{2/\eps}$.  Then, $\log_{\frac{1}{\eps^2}} c_i - \log_{\frac{1}{\eps^2}} c_{i'}  > \frac{2}{\eps}.$  By the integrality of $\frac{2}{\eps}$, we conclude that the following holds.
\begin{equation*}
\left\lceil \log_{\frac{1}{\eps^2}} c_i \right\rceil - \left\lceil \log_{\frac{1}{\eps^2}} c_{i'} \right\rceil  \geq \frac{2}{\eps}.
\end{equation*}
Thus, by the pigeonhole principle, there are at least two integers $x<y$ which are equivalent to $t_{min}$ modulo $\frac{1}{\eps}$ such that $\left\lceil \log_{\frac{1}{\eps^2}} c_i \right\rceil \geq y >x > \left\lceil \log_{\frac{1}{\eps^2}} c_{i'} \right\rceil$.  Next, we define $z$ to be either $x$ or $y$ according to the following rule.  If $\left\lceil \log_{\frac{1}{\eps^2}} c_h \right\rceil \neq x$ then we let $z=x$, and otherwise we let $z=y$.  Then, observe that when we deleted the set of machines $M(t_{min}) \setminus \{ h\}$, we deleted all machines with standard capacities in the interval 
$ \left( \left( \frac{1}{\eps^2}\right)^{z-1}, \left( \frac{1}{\eps^2}\right)^{z} \right] $ and in particular $c_i$ and $c_{i'}$ do not belong to this interval.
Let $i''$ be the maximum index of a machine with $c_{i''} > \left( \frac{1}{\eps^2}\right)^{z}$.  Then, since $x\leq z\leq y$ by definition of $i''$ we have $c_i\geq  c_{i''} > c_{i'}$, however the ratio between $c_{i''}$ and $c_{i''+1}$ is strictly larger than $\frac{1}{\eps^2}$ contradicting the assumption that $i,i'\in S_q$ so $i,i',i'',i''+1\in S_q$, and thus the claim follows.
\qed\end{proof}

We next partition the job set $J=\{ 1,2,\ldots ,n\}$ as follows.  For $q=1,2,\ldots ,\kappa$ the job subset $J_q$ is defined as 
\begin{equation*} 
J_q = \{ j \in J: \frac{c_{\ell(q)}}{\eps} \geq p_j > \eps\cdot c_{r(q)} \}
\end{equation*}
The set $J_0$ is $J_0 =\{ j\in J: p_j > \frac{c_1}{\eps}\}$ and for every $q=1,2,\ldots ,\kappa$ the set 
$$J'_q=\{ j\in J: \frac{c_{\ell(q+1)}}{\eps}  < p_j \leq  \eps\cdot c_{r(q)}\}$$ where $c_{\ell(\kappa+1)}=0$.
Observe that every nice solution allocates all jobs of $J_0$ to machine $h$, and for every $q$ it allocates all jobs of $J_q\cup J'_q$ to machines in $\{ h\} \cup \bigcup_{i=1}^q S_i$.

For $q=1,2,\ldots ,\kappa$, we define a relaxation of the problem \pr\ where the set of machines is $S_q$, the set of jobs is $J_q$ and in addition we have sand consisting of jobs of total size $\phi_q$, and where we need to schedule all jobs and the sand on the machines $S_q$ but we are allowed to leave jobs and sand of total size at most $\psi_{q}$ unscheduled (these jobs are assigned to machines with indices smaller than $\ell(q)$ or to machine $h$).  The notion of sand means that the jobs that are part of the sand can be assigned fractionally to machines.   We denote by $\aux_q(\phi_q,\psi_q)$ the relaxation corresponding to the index $q$ together with the two numerical parameters $\phi_q,\psi_q$.

We will show that if $\phi_q$ is an integer multiply of $\eps \cdot c_{r(q)}$ while $\psi_q$ is an integer multiply of $\frac{c_{\ell(q)}}{\eps}$, then $\aux_q$ can be approximated within a multiplicative factor of $(1+\eps)$ with time complexity that fits the assumptions of an EPTAS.  That is, we will prove the following theorem in the next section.
\begin{theorem}\label{auxalg}
There exists an algorithm \alg\ that given an instance of $\aux_q$ defined by $q,\phi_q,\psi_q$ such that $\phi_q$ is an integer multiply of $\eps \cdot c_{r(q)}$ while $\psi_q$ is an integer multiply of $\frac{c_{\ell(q)}}{\eps}$, \alg\ returns a $(1+\eps)$-approximated solution to $\aux_q$ and the time complexity of $\alg$ is upper bounded by $T(m,n,\frac{1}{\eps})$ where $T(m,n,\frac{1}{\eps})=O(\left( (1/\eps)^{{O(1/\eps^{10})}} \right) \cdot m^2\log^3m)$.
\end{theorem}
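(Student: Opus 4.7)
My plan is to formulate $\aux_q$ as an $n$-fold integer program and invoke the known $n$-fold programming algorithms. The key enabling fact is Lemma~\ref{bounded-ratio}: every pair of machines in $S_q$ has standard capacities within a $(1/\eps)^{4/\eps}$ factor, so after a uniform rescaling the instance will have only a few distinct job sizes and a small range of integer entries, which is exactly the regime in which $n$-fold programming is efficient.

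First I would bound the number of distinct job sizes appearing in $J_q$. By definition, each $p_j$ with $j\in J_q$ lies in the interval $\left(\eps\cdot c_{r(q)},\; c_{\ell(q)}/\eps\right]$, and by Lemma~\ref{bounded-ratio} this interval is covered by $O(1/\eps)$ of the intervals $\left[(1/\eps)^{\tau},(1/\eps)^{\tau+1}\right)$ used to round job sizes in Section~\ref{initialsec}. Since each such interval carries at most $1/\eps^2$ distinct rounded sizes, there are $K=O(1/\eps^3)$ distinct job sizes in $J_q$; I group the jobs accordingly into $K$ types with multiplicities $n_1,\ldots,n_K$. Next I discretize the sand into units of $\eps\cdot c_{r(q)}$, which by hypothesis divides $\phi_q$; rounding sand to multiples of $\eps c_{r(q)}$ perturbs the load on each machine by at most $\eps c_{r(q)}\leq \eps c_i$, and by Observation~\ref{obs1} together with $f_i=c_i\sigma_i$ this costs only an extra $(1+\eps)$ multiplicative factor. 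Finally I rescale every size in the instance by $\eps\cdot c_{r(q)}$, so that all $p^{(k)}$ and all rescaled standard capacities are integers bounded by $(1/\eps)^{4/\eps+O(1)}$.

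With this discretization in hand I would set up an $n$-fold ILP with one brick per machine $i\in S_q$. Each brick carries integer variables $y_{i,1},\ldots,y_{i,K}$ counting how many jobs of each type are assigned to $i$, a sand variable $s_i$, and an auxiliary overtime variable $w_i\geq 0$ subject to the local constraints $w_i\geq\sum_k p^{(k)}y_{i,k}+s_i-c_i$ and the nice-solution cap $\sum_k p^{(k)}y_{i,k}+s_i\leq c_i/\eps$. The globally-coupling rows are $\sum_{i\in S_q} y_{i,k}+u_k=n_k$ for each type $k$, an analogous equality $\sum_{i\in S_q} s_i+s_0=\phi_q/(\eps c_{r(q)})$ for sand, and the unscheduled-mass bound $\sum_k p^{(k)}u_k+s_0\leq \psi_q/(\eps c_{r(q)})$. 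Minimizing $\sum_i\sigma_i w_i$ (the fixed costs $\sum_i f_i$ contribute only an additive constant) exactly captures $\aux_q$, and the convexity of $cost_i$ guarantees that $w_i=\max(0,L_i-c_i)$ at any optimum.

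The last step is to feed this ILP to the $n$-fold programming algorithm used in \cite{JKMR19}: its running time is polynomial in the number of bricks $|S_q|\leq m$ and only exponential in the brick dimension $t=O(1/\eps^3)$, the number of globally-coupling rows $r=O(1/\eps^3)$, and the largest entry $\Delta=(1/\eps)^{O(1/\eps)}$. The main obstacle I anticipate is the careful accounting needed to convert these parameters into the precise bound $O\!\left((1/\eps)^{O(1/\eps^{10})}\cdot m^2\log^3 m\right)$ claimed in the theorem, and to check that the $(1+\eps)$-approximation guarantee survives both the sand discretization and the linearization of the cost function via the auxiliary $w_i$.
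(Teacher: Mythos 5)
Your approach is correct and takes a genuinely different route from the paper's. Both proofs discretize the sand into dummy units of size $\eps c_{r(q)}$ (the paper justifies this via Lemma~\ref{apx-n-fold-lem}) and then formulate an $n$-fold ILP, but the paper uses a \emph{configuration} ILP: for each machine $i\in S_q$ and each configuration $c$ (a vector giving the multiplicity of each of the $O(1/\eps^4)$ distinct sizes) it has a $\{0,1\}$ variable $x_{i,c}$, with linear objective $\sum_{i,c} cost^i(c)\,x_{i,c}$ and a single local constraint $\sum_c x_{i,c}=1$ per brick, so the brick dimension is the number of configurations, $t\le (1/\eps)^{O(1/\eps^5)}$. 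You instead keep direct job-count variables $y_{i,k}$ plus a sand variable and an epigraph variable $w_i$ per brick, giving $t=O(1/\eps^3)$ — an exponential saving in the brick dimension, which would yield a tighter $\eps$-dependence (roughly $(1/\eps)^{O(1/\eps^7)}$ rather than the paper's $(1/\eps)^{O(1/\eps^{10})}$). The trade-off is accuracy of the objective: the paper precomputes each configuration's cost exactly, so solving $\aux'_q$ is exact, whereas in your formulation $w_i$ is an integer and the rescaled $c_i$ need not be, so $w_i$ rounds the overtime up by at most one rescaled unit, i.e.\ by $\eps c_{r(q)}\le\eps c_i$, costing an extra $(1+\eps)$ factor beyond the one from sand discretization. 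You can either accept the $(1+\eps)^2$ loss (harmless downstream), or eliminate $w_i$ by introducing an exact integer load variable $L_i=\sum_k p^{(k)}y_{i,k}+s_i$ and invoking the \emph{separable-convex-objective} variant of generalized $n$-fold programming from \cite{E+19} with objective $\sum_i cost_i(L_i)$, which is what the time bound in the theorem is actually drawn from — note you should be citing \cite{E+19} here rather than \cite{JKMR19}, which is the paper's reference for an earlier $n$-fold-based scheduling EPTAS, not for the running-time bound.
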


Before presenting the proof of Theorem \ref{auxalg}, we show that the existence of the algorithm \alg\ is sufficient to guarantee the existence of an EPTAS for \pr.  
\begin{theorem}\label{mainthm}
There is an algorithm with time complexity $O(n^2\cdot m \cdot T(m,n,\frac{1}{\eps}))$ that given the rounded instance returns a solution whose cost is at most $(1+\eps)^4(1+3\eps)\cdot cost(\optn)$. 
\end{theorem}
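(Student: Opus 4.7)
The plan is to use the algorithm $\alg$ from Theorem~\ref{auxalg} as a subroutine on each class $S_q$ for all relevant grid values of $(\phi_q,\psi_q)$, and then to stitch the per-class solutions together by a dynamic program whose state tracks the load eventually placed on machine $h$. For each $q=1,\ldots,\kappa$ I enumerate pairs $(\phi_q,\psi_q)$ with $\phi_q$ an integer multiple of $\eps\cdot c_{r(q)}$ and $\psi_q$ an integer multiple of $c_{\ell(q)}/\eps$. One never needs $\phi_q$ above the total size of $J'_q$ (at most $n\cdot\eps c_{r(q)}$) nor $\psi_q$ above the total size of $J_q$ (at most $n\cdot c_{\ell(q)}/\eps$), so there are $O(n)$ candidates for each and $O(n^2)$ pairs per class. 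For each such pair I invoke $\alg$ once to obtain a $(1+\eps)$-approximate cost $C_q(\phi_q,\psi_q)$ for $\aux_q(\phi_q,\psi_q)$; summed over the $\kappa\le m$ classes this uses $O(n^2\cdot m\cdot T(m,n,1/\eps))$ time, matching the stated bound.

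Next, I combine the per-class tables by a DP over $q=1,\ldots,\kappa$. The DP state records (up to grid rounding) the cumulative size $W_q$ of material destined for $h$, namely the total size of $J_0$ together with $\sum_{r\le q}(|J'_r|-\phi_r)$ (unused sand) and $\sum_{r\le q}\psi_r$ (outsourced loads), where $|J'_r|$ denotes the total size of the jobs in $J'_r$. At each step the DP picks the grid pair $(\phi_q,\psi_q)$ minimizing $C_q(\phi_q,\psi_q)$ plus the increment it induces on $cost_h(W_\kappa)$; this has polynomially many states and transitions, so its overhead is dominated by the calls to $\alg$. The algorithm returns the solution that assigns $J_0$ together with all unscheduled sand and outsourced jobs to $h$ and uses the per-class assignments produced by $\alg$ at the minimizing grid pairs.

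For correctness, I start from the solution $\opt'$ of the previous section, and for each $q$ let $\phi_q^{\opt},\psi_q^{\opt}$ denote the sand and outsourced sizes induced by $\opt'$, with $\phi_q^*\le\phi_q^{\opt}$ and $\psi_q^*\ge\psi_q^{\opt}$ their respective roundings to the corresponding grid. Applying Observation~\ref{obs1} to each machine in $S_q$ yields one factor of $(1+\eps)$ for each of the two parameters that was rounded, $\alg$ contributes another factor of $(1+\eps)$, and the shifting lemma of the previous section contributes the fourth factor of $(1+\eps)$ relating $cost(\opt')$ to $cost(\optn)$. The extra mass forced onto $h$ by the rounding costs at most $1$ per unit (since $\sigma_h=1$), and can be bounded class-by-class using Lemma~\ref{bounded-ratio} together with the lower bound $cost_i(L_i)\ge f_i$ on loaded machines in $S_q$, giving a total of at most an $O(\eps)$ fraction of $cost(\opt')$ and contributing the multiplicative $(1+3\eps)$ factor. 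Accumulating these yields the claimed $(1+\eps)^4(1+3\eps)\cdot cost(\optn)$ bound. The main obstacle I expect is exactly this last step: carefully showing that the rounding mass pushed to $h$ is an $O(\eps)$ fraction of $cost(\opt')$ requires combining $\sigma_h=1$ with the inter-class capacity separation guaranteed by Lemma~\ref{bounded-ratio}.
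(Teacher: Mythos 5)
Your outline matches the paper's at the top level (enumerate grid pairs for each class, call \alg, stitch with a DP, and account for the lost factors), and your time bound is correct, but there are three substantive gaps. First, the DP state is not well-specified: the quantity $W_q=\sum_{r\le q}(|J'_r|-\phi_r)+\sum_{r\le q}\psi_r+|J_0|$ is ill-defined because $\phi_r$ includes sand cascaded from classes $r'>r$, not merely $J'_r$, so $|J'_r|-\phi_r$ can be negative and the sum does not track a physically meaningful quantity. The paper instead makes $\phi_q$ itself the layer-$q$ state and derives $\psi_q$ from $\phi_{q-1}$ and $|J'_{q-1}|$ via \eqref{psiq-def}; this encodes the flow constraint cleanly and you would need something equivalent. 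Second, you stop at ``use the per-class assignments produced by \alg,'' but \alg's output assigns \emph{sand} to machines, not actual small jobs. Converting that to an integral schedule (the paper's third step: greedily packing small jobs to cover each machine's $vol(i)$, overshooting by at most $\eps c_{r(q)}\le\eps c_i$ per machine, then invoking Observation~\ref{obs1}) is where one of the $(1+\eps)$ factors comes from, and it is absent from your proposal. Third, the factor you flag as the ``main obstacle''---bounding the extra mass pushed to $h$ by the rounding as an $O(\eps)$ fraction of $cost(\opt')$---is a global argument that the paper does not make and that is awkward to make, since a priori the per-class rounding slack sums over $q$. The paper sidesteps it entirely by rounding $\hat\phi_q$ \emph{up} (you round down) plus an extra $\eps c_{\ell(q)}$, and then arguing \emph{locally}: a solution of $\aux_q(\hat\phi_q,\hat\psi_q)$ becomes feasible for $\aux_q(\phi'_q,\psi'_q)$ by dumping at most $3\eps c_{\ell(q)}$ extra sand on machine $\ell(q)$ alone, which raises that single machine's cost by at most $(1+3\eps)$ because $cost_{\ell(q)}(\cdot)\ge f_{\ell(q)}=c_{\ell(q)}\sigma_{\ell(q)}$. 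Nothing ever touches $h$ in this step. Relatedly, your accounting of two separate $(1+\eps)$ factors from Observation~\ref{obs1} for the two parameter roundings double-counts: the paper charges both roundings to the single $(1+3\eps)$, and the remaining $(1+\eps)$ factors are from \alg, from the shifting lemma, and from the sand-to-jobs conversion you omitted.
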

\begin{proof}
It suffices to construct a $(1+\eps)^3(1+3\eps)$-approximation algorithm for the rounded instance after deleting the machines in $M(t_{min}) \setminus \{ h\}$.  

The first step of the algorithm is to apply \alg\ on a family $\cal F$ of inputs consisting of the following ones. For every $q=1,2,\ldots ,\kappa$, for every $\phi_q$ in the interval $[0,n \cdot \eps \cdot c_{r(q)}]$ that is an integer multiply of $\eps \cdot c_{r(q)}$, and for every $\psi_q$ in $[0,  n\cdot \frac{c_{\ell(q)}}{\eps}]$ that is an integer multiply of  $\frac{c_{\ell(q)}}{\eps}$, we apply \alg\ to solve approximately the instance $\aux_q(\phi_q,\psi_q)$.  We denote by $A(q,\phi_q,\psi_q)$ the cost of the solution $\sol(q,\phi_q,\psi_q)$ returned by \alg\ when applied on the instance $\aux_q(\phi_q,\psi_q)$.  The time complexity of the first step is 
$O(n^2m \cdot T(m,n,\frac{1}{\eps}))$ as the number of inputs solved by \alg\ is at most $O(n^2m)$ using  $\kappa \leq m$.

The second step is to use dynamic programming in order to concatenate a sequence of inputs in the family $\cal F$ consisting of one input for each value of $q$.  We define the dynamic programming formulation as a shortest path computation in a directed layered graph $G=(V,E)$.  The graph consists of $\kappa+1$ layers denoted as $L_0,L_1,L_2,\ldots ,L_{\kappa}$ and one additional node $t$.  The nodes of layer $L_q$ are associated with the possible value of $\phi_q$.  This defines the nodes of layers $L_1,L_2,\ldots ,L_{\kappa}$, however to use this definition for layer $L_0$ we define a value $\phi_0$ as the total size of jobs in $J_0$ plus the value of $\psi_1$, i.e., $\phi_0=\sum_{j\in J_0} p_j +\psi_1$. Thus, in every layer  there are $n+1$ nodes, and in total there are $O(nm)$ nodes in $G$.  We next describe the edge set of $G$ together with the length associated with each edge.  For $q=\kappa,\kappa-1,\ldots ,1$ and a node $\phi_q$ in layer $L_q$ and node $\phi_{q-1}$ in layer $L_{q-1}$ we have an edge from the node $\phi_q$ in $L_q$ towards node $\phi_{q-1}$ in $L_{q-1}$ whose length is defined as follows.  We compute a value $\psi_q$ that is the maximum integer multiply of $\frac{c_{\ell(q)}}{\eps}$ such that together with the total size of jobs in $J'_{q-1}$ the resulting size is at most $\phi_{q-1}$.  That is, for $q\geq 2$, we compute \begin{equation}\label{psiq-def}
\psi_q= \frac{c_{\ell(q)}}{\eps} \cdot \min\left\{n, \left\lfloor \frac{\phi_{q-1} - \sum_{j\in J'_{q-1}} p_j}{\frac{c_{\ell(q)}}{\eps}} \right\rfloor \right\} .
\end{equation} 
The value of $\psi_1$ is computed slightly different.  We subtract from $\phi_0$ the total size of jobs in $J_0$ and the resulting value is $\psi_1$ (note that this is already a rounded value).  That is, $\psi_1=\phi_0-\sum_{j\in J_0} p_j$.
 The length of the edge in the graph between these two nodes is defined as $A(q,\phi_q,\psi_q)$.  For every node $\phi_0$ in layer $L_0$ we have an edge from this node directed to $t$ whose length is $\phi_0$.  This length of the edges directed to $t$ is motivated by the fact that assigning jobs of total size $\phi_0$ to machine $h$ costs at most $\phi_0$.  In the resulting directed graph we find a shortest path $\cal P$ from the node $\phi_{\kappa}$ in layer $L_{\kappa}$ to node $t$, where $\phi_{\kappa}$ is defined as follows.  
\begin{equation*}
\phi_{\kappa} = \eps\cdot c_{r(\kappa)} \cdot \left\lceil  \frac{\sum_{j \in J'_{\kappa}}p_j}{\eps\cdot c_{r(\kappa)}}  \right\rceil .
\end{equation*}
The time complexity of the second step is determined by the number of edges in the graph that is at most $O(n^2m)$.  We denote by $\phi^*_q$ the node in layer $L_q$ that belongs to the shortest path computed by the algorithm, and we let $\psi^*_q$ be the corresponding value of $\psi_q$ that the algorithm computed using \eqref{psiq-def} for the sequence of $\phi^*$.

The third (and last) step of the algorithm is to compute a feasible solution for \pr\ whose cost is at most $(1+\eps)$ times the total length of $\cal P$.  For $q=\kappa,\kappa-1,\ldots ,1$, we show that we can assign (integrally) the jobs $J_q$ and small jobs of total size $\phi^*_q$ each of which of size at most $\eps\cdot c_{r(q)}$ such that a total size of at most $\psi^*_q$ is not assigned (such a solution is called feasible), and the cost of this feasible solution is at most $(1+\eps) \cdot A(q,\phi^*_q,\psi^*_q)$.  

Consider one specific value of $q$.  We say that the jobs in $J_q$ are {\it large} and the other jobs are {\it small}.  The solution $\sol(q,\phi^*_q,\psi^*_q)$ returned by \alg\ specifies the assignment of large jobs to machines in $S_q$ (some of these jobs might be unassigned) and for each machine $i\in S_q$ it defines a volume $vol(i)$ of sand that is assigned to $i$.  We denote by $J(q)\subseteq J_q$ the set of large jobs that the solution $\sol(q,\phi^*_q,\psi^*_q)$  does not assign to machines in $S_q$.  The feasibility of the solution $\sol(q,\phi^*_q,\psi^*_q)$  (for $\aux_q$) ensures the following inequalities. 
\begin{equation}\label{vol-guarantee}
\phi^*_q - \sum_{i\in S_q} vol(i) + \sum_{j\in J(q)} p_j \leq \psi^*_q, \ \ \ \ \ \ \mbox{and} \ \ \ \ \ \phi^*_q \geq \sum_{i\in S_q} vol(i) ,  
\end{equation} 
where the first inequality holds by the guarantee on the total size of jobs and sand that the solution does not assign, and the second inequality follows by the fact that the total size of sand in the instance is at most $\phi^*_q$.  In the solution that we create we assign the jobs in $J_q\setminus J(q)$ exactly as in $\sol(q,\phi^*_q,\psi^*_q)$, while for the assignment of small jobs we consider the list of small jobs $Small$ and we process the machines in $S_q$ one by one in an arbitrary order as long as $Small$ is not empty.  When considering the current machine $i$, we find a minimum prefix of $Small$ whose total size is at least $vol(i)$, this prefix of jobs is assigned to $i$, we delete it from $Small$ and move to the next machine in $S_q$.  If this prefix is undefined, it means that the total size of jobs in $Small$ is smaller than $vol(i)$ and we assign all jobs in $Small$ to $i$ (and stop the assignment process of small jobs to machines in $S_q$).  The time complexity of this step is $O(n+m)$.  Furthermore, if there is a machine $i$ such that when processing $i$ all jobs in $Small$ are assigned to $i$, then all small jobs are assigned and the feasibility of the solution we create to machines in $S_q$ follows by \eqref{vol-guarantee}.  Otherwise, every machine $i\in S_q$ receives a total size of small jobs of at least $vol(i)$, and once again by \eqref{vol-guarantee} the resulting solution we create is a feasible solution.  We observe that for every machine $i\in S_q$, the total size of jobs assigned to $i$ is at most $\eps\cdot c_{r(q)} \leq \eps c_i$ larger than the total size of jobs (and sand) assigned to $i$ in  $\sol(q,\phi^*_q,\psi^*_q)$.  By observation \ref{obs1}, this increase of the total size of jobs assigned to $i$ may increase the cost of $i$ by a multiplicative factor of at most $(1+\eps)$ as we show next.  If $x$ denotes the total size of jobs and sand assigned to $i$ in $\sol(q,\phi^*_q,\psi^*_q)$, then the cost of $i$ in that solution is $cost_i(\max\{ x,c_i\})$ and in our solution it is at most $cost_i(x+\eps \cdot c_i) \leq cost_i((1+\eps) \cdot \max\{ x,c_i\}) \leq (1+\eps) \cdot cost_i(\max\{ x,c_i\})$ where the first inequality follows by the monotonicity of the cost function and the second inequality by Observation \ref{obs1}.  In order to use the induction (and decrease the value of $q$ by $1$), note that the total size of jobs not assigned to machines that are not $h$ and with indices at least $\ell(q)$ which are of size at most $\frac{c_{\ell(q)}}{\eps}$ is at most $\psi^*_q$, and the total size of jobs with sizes in the interval $( \frac{c_{\ell(q)}}{\eps} ,  \eps\cdot c_{r(q-1)} ] $ is the total size of jobs in $J'_{q-1}$.  Thus, by the definition of $\psi^*_q$ in terms of $\phi^*_{q-1}$, we conclude that the total size of jobs of size at most $\eps\cdot c_{r(q-1)}$ that are still unscheduled is at most $\phi^*_{q-1}$ and indeed we guarantee the assumption on the recursive algorithm for $q-1$.  The claim follows as any set of jobs of total size at most $\phi^*_0$ can be assigned to machine $h$ increasing the cost of that machine by at most $\phi^*_0$ that is the length of the edge of $\cal P$ adjacent to $t$.

The theorem follow by showing that the graph $G$ has a path $P_{opt}$ whose total length is at most $(1+\eps)\cdot(1+3\eps)\cdot cost(\opt')$ where $\opt'$ is a cheapest solution among all nice solutions which do not allocate jobs to machines in $M(t_{min}) \setminus \{ h\}$.  
Based on \opt' we define a fractional value of $\phi_q$ for all $q=0,1,2,\ldots ,\kappa$ as follows where we let $J'_0=J_0$.  For a given value of $q$, the fractional value $\hat{\phi}_q$ of $\phi_q$ is the total size of jobs in $J'_q\cup \bigcup_{q'=q+1}^{\kappa} (J_{q'}\cup J'_{q'})$ that \opt' assigns to machines in $\{ h \} \cup \bigcup_{q'=1}^{q} S_{q'}$.   Similarly, we define $\hat{\psi}_q= \hat{\phi}_{q-1} - \sum_{j\in J'_{q-1}} p_j$.
By the definition of $\aux_q$ we conclude that the cost of an optimal solution to $\aux_q(\hat{\phi}_q,\hat{\psi}_q)$ is at most the cost \opt' pays for machines in $S_q$.

Next, for every $q=1,2,...,\kappa$, we round up $\hat{\psi}_q$ to the next integer multiply of $\eps\cdot c_{r(q-1)}$ and we denote by $\psi'_q$ this rounded up value. This may force us to increase $\phi_{q-1}$ and thus our next step is to round up $\hat{\phi}_q$ (for all $q=1,2,\ldots ,\kappa$) to the next integer multiply of $\eps\cdot c_{\ell(q)}$ and to add another $\eps\cdot c_{\ell(q)}$ to the rounded up value to get the value $\phi'_q$.  The rounding of $\phi'_0$ is different and we round down $\hat{\phi}_0$ to the next value of the form of the total size of jobs in $J_0$ plus an integer multiply of $\eps \cdot c_{\ell(1)}$.

When we compare the two instances (for $q\geq 1$) of the auxiliary problem $\aux_q(\hat{\phi}_q,\hat{\psi}_q)$ with $\aux_q(\phi'_q,\psi'_q)$, we can take a solution of the first one and add $3\eps\cdot c_{\ell(q)}$ size of sand to machine $c_{\ell(q)}$ to get a feasible solution of the second problem. This is sufficient even for $q=1$ to get a feasible solution for the instance we solved for the edge between $\phi'_1$ in layer $L_1$ to node $\phi'_0$ in layer $L_0$.  This additional sand increases the cost of machine $\ell(q)$ by a multiplicative factor of at most $(1+3\eps)$ but this input satisfies the assumptions for which \alg\ is a $(1+\eps)$-approximation for \aux.  Thus, the length of the edge between $\phi'_q$ in layer $L_q$ to $\phi'_{q-1}$ in layer $L_{q-1}$ is at most $(1+\eps)\cdot(1+3\eps)$ times the total cost of the machines in $S_q$ that \opt' pays.   Since $\phi'_0$ is smaller than $\hat{\phi}_0$, we conclude that the total size of jobs which \opt' assigns to $h$ is larger than the one in our solution due to this edge directed to $t$.
\qed \end{proof}

\section{Approximating  $\aux_q(\phi_q,\psi_q)$ via the use of $n$-fold programming\label{nfoldsec}}
We assume that $\phi_q$ is an integer multiply of $\eps \cdot c_{r(q)}$ while $\psi_q$ is an integer multiply of $\frac{c_{\ell(q)}}{\eps}$ (and hence also an integer multiply of $\eps \cdot c_{r(q)}$).
We first show that by restricting ourselves to solutions of $\aux_q$ for which the total size of sand assigned to each machine is an integer multiply of $\eps c_{r(q)}$ the approximation ratio is multiplied by at most $1+\eps$.  We denote by $\aux'_q(\phi_q,\psi_q)$ the resulting auxiliary problem with this additional constraint.
\begin{lemma}\label{apx-n-fold-lem}
Let $\sol'$ be an optimal solution for $\aux'_q(\phi_q,\psi_q)$, then $\sol'$ is a $(1+\eps)$-approximation for  $\aux_q(\phi_q,\psi_q)$.
\end{lemma}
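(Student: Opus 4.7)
The plan is to take an optimal solution $\sol^*$ of $\aux_q(\phi_q,\psi_q)$, and construct from it a feasible solution $\sol''$ for $\aux'_q(\phi_q,\psi_q)$ satisfying $cost(\sol'')\le (1+\eps)\cdot cost(\sol^*)$. Since $\sol'$ is optimal for $\aux'_q$, this gives $cost(\sol')\le cost(\sol'')\le (1+\eps)\cdot cost(\sol^*)$, which is exactly what the lemma asserts.

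In $\sol''$ I will copy the assignment of the large jobs (and in particular the set $J(q)$ of unscheduled large jobs) from $\sol^*$, and only modify the sand volumes. For each machine $i\in S_q$, write $vol(i) = k_i\cdot (\eps c_{r(q)}) + r_i$ with $k_i$ a nonnegative integer and $r_i\in[0,\eps c_{r(q)})$, and set $vol''(i) = (k_i+n_i)\cdot \eps c_{r(q)}$ for a binary choice $n_i\in\{0,1\}$ to be determined below. Whatever the $n_i$'s are, the new load $L''_i$ on machine $i$ differs from the original load $L^*_i$ by at most $\eps c_{r(q)}\le \eps c_i$, using $c_i\ge c_{r(q)}$ for $i\in S_q$. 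A short case analysis on whether each of $L^*_i,L''_i$ lies above or below $c_i$, combined with the assumption $f_i = c_i\sigma_i$, then yields $cost_i(L''_i)\le (1+\eps)\cdot cost_i(L^*_i)$; this is essentially the calculation of Observation~\ref{obs1}, with an extra subcase needed when $L^*_i\le c_i < L''_i$, in which $cost_i(L''_i)=f_i+\sigma_i(L''_i-c_i)\le f_i+\sigma_i\eps c_i = (1+\eps)f_i = (1+\eps)cost_i(L^*_i)$.

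The main obstacle is ensuring the global feasibility of $\sol''$ in $\aux'_q$: the total sand must satisfy $\sum_i vol''(i)\le \phi_q$, and the total unscheduled amount must remain $\le \psi_q$. Since the unscheduled jobs are unchanged and unscheduled sand equals $\phi_q-\sum_i vol''(i)$, the latter reduces to $\sum_i vol''(i)\ge \sum_i vol(i)$, so that the unscheduled sand does not grow. Translating both requirements to constraints on the $n_i$'s gives
\begin{equation*}
\left\lceil \frac{\sum_i r_i}{\eps c_{r(q)}} \right\rceil \;\le\; \sum_i n_i \;\le\; \frac{\phi_q}{\eps c_{r(q)}} - \sum_i k_i.
\end{equation*}
The upper bound is an integer since $\phi_q$ is an integer multiple of $\eps c_{r(q)}$, and the hypothesis $\sum_i vol(i)\le \phi_q$ implies it is at least $\sum_i r_i/(\eps c_{r(q)})$, hence at least the ceiling on the left. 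The lower bound is at most $|S_q|$ because each $r_i<\eps c_{r(q)}$. Therefore an admissible integer value for $\sum_i n_i$ in $\{0,1,\ldots,|S_q|\}$ exists and can be realized, for instance, by setting $n_i=1$ on any appropriate subset of $S_q$ of the required size. This produces the desired $\sol''$ and completes the proof.
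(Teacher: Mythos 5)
Your proof is correct and takes essentially the same approach as the paper: start from an optimal $\aux_q$ solution, round each machine's sand volume to a multiple of $\eps c_{r(q)}$ so that feasibility is preserved, and bound the per-machine cost increase by $(1+\eps)$ using $c_i \ge c_{r(q)}$ and $f_i = c_i\sigma_i$. Your per-machine up/down rounding via the $n_i\in\{0,1\}$ choice is a slightly cleaner way to avoid overshooting $\phi_q$ than the paper's ``round everything up, then remove integer multiples from some machines,'' but it is the same underlying idea.
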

\begin{proof}
$\sol'$ is clearly a feasible solution to $\aux_q(\phi_q,\psi_q)$.  It thus suffices to upper bound its cost.  Let $\sol$ be an optimal solution for $\aux_q(\phi_q,\psi_q)$.  We modify the (total) size of sand assigned to each machine $i\in S_q$ by rounding it up to the next integer multiply of $\eps \cdot c_{r(q)}$.  If the total size of sand which we allocate is larger than $\phi_q$, then we decrease integer multiplies of  $\eps \cdot c_{r(q)}$ from the size of sand assigned to some machines so that the total size of sand which we assigned is exactly $\phi_q$.  Observe that by rounding up the size of sand assigned to each machine we increase its load by at most $\eps \cdot c_{r(q)}$.  Let $x$ be the original load of $i$ (in $\sol$) and let $y$ be its load in the new created solution, then we have $x\leq y \leq x+\eps c_{r(q)}$.  If $y\leq c_i$, then the cost of machine $i$ is $f_i$ in both solutions.  Otherwise, $cost_i(y) \leq cost_i(x)+ \eps f_i\leq (1+\eps)cost_i(x)$.  Thus, the cost of $\sol'$ is at most $1+\eps$ times the cost of $\sol$.
\qed\end{proof}

Based on Lemma \ref{apx-n-fold-lem}, the proof of Theorem \ref{auxalg} and thus also the proof of Theorem \ref{mainthm} follow by establishing an exact algorithm for solving  $\aux'_q(\phi_q,\psi_q)$ (i.e., an algorithm for finding an optimal solution of $\aux'$) whose time complexity is upper bounded by $O\left(\left( (1/\eps)^{{O(1/\eps^{10})}} \right) \cdot m^2\log^3 m \right)$ like the algorithm we present next.

The first step of the algorithm is to partition the sand of size $\phi_q$ into a set of $\frac{\phi_q}{\eps\cdot c_{r(q)}}$ {\em dummy jobs} each of which of size $\eps c_{r(q)}$.  Observe that the total size of these dummy jobs is $\phi_q$ and an assignment of the jobs in $J_q$ and the dummy jobs to machines in $S_q$ such that the total size of unassigned jobs and dummy jobs is at most $\psi_q$ is a feasible solution to $\aux'$ and this is a characterization of the feasible solutions of $\aux'$.  Let $J^q$ be the set of jobs and dummy jobs of this instance.  In what follows we say a job $j$ and mean that $j$ is either a job or a dummy job, that is, we do not distinguish between jobs and dummy jobs of the same size.
For every $p$ that is a size of a job in $J^q$, we denote by $n_p$ the number of jobs of $J^q$ of size $p$.

Note that all jobs in $J^q$ have sizes that are integer multiply of $\eps c_{r(q)}$ and have sizes of at most $\frac{c_{\ell(q)}}{\eps}$.  Due to our rounding of the job sizes there are $O(1/\eps^2)$ distinct sizes in every interval of sizes where the upper bound is at most $1/\eps$ times the lower bound of the interval.  Thus, the number of distinct sizes of jobs in $J^q$ is at most $\frac{1}{\eps^2} \cdot \left( \log_{1/\eps} \frac{c_{\ell(q)}}{c_{r(q)}} +1 \right) < \left( \frac{7}{\eps^4} \right)$ where the inequality follows by lemma \ref{bounded-ratio}.  We let $B_q$ be the set of distinct sizes of jobs in $J^q$.

For machine $i\in S_q$ we define a configuration of machine $i$ as a vector consisting of $|B_q|$ components where the components are associated with the elements in $B_q$ in increasing order (of the sizes in this set).  Each component corresponding to $p\in B_q$ represents the number of jobs of size $p$ which are assigned to $i$.  This number is a non-negative integer that is at most $\frac{1}{\eps^2} \cdot \frac{c_{\ell(q)}}{c_{r(q)}} < \left( \frac{1}{\eps} \right)^{7/\eps}$ (as the load of $i$ is at most $\frac{c_{\ell(q)}}{\eps}$).  Thus, the number of distinct configurations of machine $i$ is at most 
$\left( \frac{1}{\eps} \right)^{(7/\eps)\cdot \left( \frac{7}{\eps^4} \right)} = \left( \frac{1}{\eps} \right)^{(49/\eps^5)} $.   Each such configuration of machine $i$ has a cost that is the value of $cost_i$ when assigned the set of jobs of this configuration.  We denote by ${\cal C}_i$ the set of configurations of machine $i$, and for each $c\in {\cal C}_i$ we let $cost^i(c)$ be the cost of this configuration.

We next formulate $\aux'$ as an integer linear program.  The decision variables are $x_{i,c}$ for every machine $i\in S_q$ and $c\in {\cal C}_i$ that is an indicator variable that equals $1$ when machine $i$ is assigned a configuration $c$ and $0$ otherwise, and the set of additional variables $y_p$ for every $p\in B_q$ encoding the number of jobs in $J^q$ of size $p$ which are not assigned to machines in $S_q$.

The objective function is clearly to minimize the total cost of the used configurations.  That is, 
\begin{equation}\label{objective}
\min \ \sum_{i\in S_q} \sum_{c\in {\cal C}_i} cost^i(c) \cdot x_{i,c} .
\end{equation}

We have the following families of constraints:

\paragraph{The global constraints.}  We have a constraint for each $p\in B_q$ saying that every job of size $p$ is either assigned to one of the machines in $S_q$ or not assigned to any machine in $S_q$.  That is, for every $p\in B_q$ we introduce the constraint:
\begin{equation}\label{global-cons}
\sum_{i\in S_q}\sum_{c\in {\cal C}_i}  c_p \cdot x_{i,c} + y_p = n_p .
\end{equation}   
In addition we have a bound of $\psi_q$ on the total size of unassigned jobs.  We divide this inequality by $\eps c_{r(q)}$ and obtain the following inequality as an additional global constraint.
\begin{equation}\label{last-global-cons}
\sum_{p\in B_q} \frac{p}{\eps c_{r(q)}} \cdot y_p \leq \frac{\psi_q}{\eps c_{r(q)}} .
\end{equation}
We use the division by this common factor to conclude that all coefficients of the global constraints are non-negative integers which are at most  $ \left( \frac{1}{\eps} \right)^{7/\eps}$.
Furthermore, observe that the number of global constraints which we denote by $r$ is a small constant $r=|B_q|+1 \leq  \left( \frac{7}{\eps^4} \right)$.

Next, we group the decision variables $x_{i,c}$ in {\em bricks} where a brick is the collection of variables corresponding to one specific machine $i$.  The columns corresponding to variables of each brick are consecutive columns of the resulting constraint matrix.

\paragraph{The local constraints.} For every brick, namely for every machine $i$, we have one local constraint involving (only) variables of that brick, namely the constraint that each machine is assigned exactly one configuration.  That is, for every $i$, the local constraint of brick $i$ is 
\begin{equation}\label{local-cons}
\sum_{c\in {\cal C}_i} x_{i,c} =1 .
\end{equation}

In addition, we have lower and upper bounds on the variables.  In our settings the $x_{i,c}$ is an indicator variable (so it should be between $0$ and $1$) while the $y_p$ are non-negative and we can add the additional (meaningless) upper bound of $n$.  Thus we introduce the following bounds.

\begin{equation}\label{variables-bounds}
0 \leq x_{i,c} \leq 1 \ \ \ \forall i\in S_q, \ \forall c\in {\cal C}_i \ \ \ \ \ \mbox{ and } \ \ \ \ \ 0 \leq y_p \leq n \ \ \ \forall p \in B_q .
\end{equation}

Using these constraints and variables, the integer linear program formulating $\aux'$ is to minimize the objective \eqref{objective} subject to the constraints \eqref{global-cons} for every $p\in B_q$, the constraint \eqref{last-global-cons}, the constraints \eqref{local-cons} for every $i\in S_q$, and the constraints \eqref{variables-bounds} (in addition to the requirement that all variables are integers).

For using the results for $n$-fold programming we note the following bounds. 
\begin{itemize}
\item The number of global constraints is $r \leq  \frac{7}{\eps^4}$.
\item The number of local constraints of a brick is $s=1$.
\item The maximum absolute value of a component of the constraint matrix (i.e., the infinity-norm of the matrix) is $a\leq  \left( \frac{1}{\eps} \right)^{7/\eps}$. 
\item The number of variables in every brick is $t \leq \left( \frac{1}{\eps} \right)^{(49/\eps^5)} $. 
\item The number of bricks is $|S_q|\leq m$ while the right hand side is bounded by $n$.
\end{itemize}

The problem we formulated is a special case of {\em generalized $n$-fold programming} where the number of variables $N = |S_q| \cdot t+|B_q| \leq mt+\frac{7}{\eps^4} \leq mt+t$.  The running time of the algorithm of Eisenbrand et al. \cite{E+19} for solving such problem (see the scaling,$\rho$ column of the linear objective case of Corollary 91 in \cite{E+19}) is upper bounded by $O((ars)^{O(r^2s+s^2r)} \cdot N^2 \log^3 (nN))$.  Using our bounds and $s=1$, this is upper bounded by $O((ar)^{O(r^2)} \cdot (t^2\log^3 t) \cdot (m^2\log^3m))$.  The coefficient $(ar)^{O(r^2)} \cdot t^2\log^3 t$ is a function of $\eps$ that is upper bounded by $O\left( (1/\eps)^{{O(1/\eps^{10})}} \right)$ and $m^2\log^3m$ is a strongly polynomial bound independent of $\eps$.

\section{An asymptotic fully polynomial time approximation scheme for \prc \label{prcsec}}
When considering asymptotic schemes that return a solution of cost at most $(1+\eps) cost(\opt(I)) + g(1/\eps)$ for some function $g$ where $cost(\opt(I))$ is the optimal cost of the same instance, the additive term $g(1/\eps)$ is not scalable. In order to use this definition of asymptotic approximation scheme we  assume that $\max_{i=1,2,\ldots, m} f_i =1$.

\begin{lemma}\label{fmaxlem}
Without loss of generality $\max_{i} f_i =1$, and $\min_i \sigma_i =1$.
\end{lemma}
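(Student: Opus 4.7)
\medskip

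\noindent\textbf{Proof proposal.} The plan is to combine two successive scalings: first the scaling already used in the proof of Lemma~\ref{sigma-min} to ensure $\min_i \sigma_i =1$, and then an additional scaling that normalizes the maximum fixed cost to~$1$ while preserving $\min_i \sigma_i =1$ and the structural relation~(\ref{assump1}).

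First I would invoke Lemma~\ref{sigma-min} verbatim (multiplying each $c_i$ by $\sigma=\min_i\sigma_i$, dividing each $\sigma_i$ by $\sigma$, and multiplying each $p_j$ by $\sigma$) to reduce to an equivalent instance in which $\min_i\sigma_i=1$. The lemma already guarantees that (\ref{assump1}) persists and that the cost of every solution is multiplied by the common factor $\sigma$, so this transformation is cost-preserving up to a uniform constant. For \prc\ this argument goes through without change, since the scaling is defined on machine \emph{types} and the choice of how many machines of each type to open is unaffected.

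Next, set $F=\max_i f_i$ (in the rescaled instance). I would define a second transformation that divides $f_i$ and $c_i$ by $F$ for every machine type $i$, leaves every $\sigma_i$ unchanged, and divides every job size $p_j$ by $F$. Assumption~(\ref{assump1}) is immediately preserved, because $\tfrac{f_i}{F}=\tfrac{c_i}{F}\cdot\sigma_i$, and $\min_i\sigma_i=1$ is untouched since the $\sigma_i$ are not modified. The new maximum fixed cost equals $1$ by construction. For any partition of the jobs, the load $L_i$ of machine~$i$ in the new instance equals $L_i/F$ of the old one, so the comparison $L_i\le c_i$ is invariant, and a direct computation gives
\begin{equation*}
cost_i^{\mathrm{new}}(L_i^{\mathrm{new}})=\begin{cases}\frac{f_i}{F} & \text{if }L_i\le c_i,\\[2pt] \frac{f_i}{F}+\sigma_i\!\left(\tfrac{L_i-c_i}{F}\right) & \text{otherwise},\end{cases}
\end{equation*}
so every per-machine cost, and hence the total cost of every solution (for every choice of how many copies of each machine type to open in \prc), is divided by the common factor $F$. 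Consequently optima and approximation ratios are preserved under the transformation.

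Finally I would observe that the additive slack $g(1/\eps)$ appearing in the asymptotic guarantee is meaningful precisely because the normalization $\max_i f_i=1$ forces the fixed cost of opening a single machine of any type to be bounded by~$1$, which is the intended purpose of the lemma; composing the two scalings yields an equivalent instance with $\max_i f_i=1$ and $\min_i\sigma_i=1$ simultaneously, so any algorithm for the normalized form translates back to the original input by undoing the two (known) scaling factors. I do not anticipate a real obstacle here: the only thing to check is that the two scalings are compatible, and since the second one leaves $\sigma_i$ alone, compatibility is automatic.
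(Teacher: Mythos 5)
Your proof is correct, and it gives the same conclusion, but you reverse the order of the two scalings and use a different second scaling than the paper does. The paper first divides $f_i$ and $\sigma_i$ by $F=\max_i f_i$ (leaving $c_i$ and $p_j$ untouched), which scales every per-machine cost by $1/F$ and preserves~(\ref{assump1}); it then invokes Lemma~\ref{sigma-min} and notes that the transformation there never alters the fixed costs $f_i$, so $\max_i f_i=1$ survives. You instead apply Lemma~\ref{sigma-min} first to get $\min_i\sigma_i=1$, and then normalize along the \emph{size} axis, dividing $f_i$, $c_i$, and $p_j$ by $F$ while leaving $\sigma_i$ fixed; the invariance of $\min_i\sigma_i$ is then automatic because $\sigma_i$ is untouched. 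Both composite transformations scale all solution costs by the same global factor, preserve~(\ref{assump1}), and preserve the invariant established by the other scaling, so the two proofs are equally valid. The paper's version is marginally leaner in that its $F$-scaling only modifies two families of parameters ($f_i,\sigma_i$) rather than three ($f_i,c_i,p_j$), but your ordering has the small advantage of making the persistence of $\min_i\sigma_i=1$ entirely trivial rather than relying on an observation about the internals of Lemma~\ref{sigma-min}. Your remarks about \prc\ (scaling acts on machine \emph{types}, so the first-stage choice of multiplicities is unaffected) and about the role of $\max_i f_i=1$ in making the additive term $g(1/\eps)$ meaningful are correct and consistent with the paper's motivation.
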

\begin{proof}
Assume that the claim does not hold.  We first scale all the $f_i,\sigma_i$ by dividing these numbers by a common factor of $\max_i f_i$.  Observe that the cost of every feasible solution is scaled by this factor, and the first part of the claim holds.  Furthermore, assumption  \eqref{assump1} still holds.  We apply Lemma \ref{sigma-min} and observe that the transformation in the proof of that lemma does not change the fixed costs of machines.
\qed
\end{proof}

Let $I$ be the resulting instance after (perhaps) changing the input according to the proof of the last lemma.  Let $\opt(I)$ denotes an optimal solution for this instance, and $cost(\opt(I))$ denotes its cost.  We  show the existence of an algorithm that returns a feasible solution for $I$ with cost at most  $(1+\eps) \cdot (1+2\eps) \cdot cost(\opt(I)) + g(1/\eps)$ and time complexity that is upper bounded by a polynomial in $n,m,\frac{1}{\eps}$.

\begin{lemma}\label{sollem1}
Consider an optimal solution $\sol$ for $I$ under the additional constraint that for every $i$ and  every machine $\mu$ of type $i$, either $\sol$ assigns a unique job to $\mu$, or the load of $\mu$ is at most $\frac{c_i}{\eps}$.  Then, the cost of $\sol$ is at most $(1+2\eps)\cdot cost(\opt(I))$.
\end{lemma}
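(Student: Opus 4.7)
The plan is to exhibit a feasible solution $\sol'$ satisfying the structural constraint with $cost(\sol') \le (1+2\eps)\cdot cost(\opt(I))$; since $\sol$ is optimal among such constrained solutions, the bound on $cost(\sol)$ follows. I would build $\sol'$ from $\opt(I)$ by modifying only the ``violating'' machines, i.e., those machines $\mu$ of some type $i$ for which simultaneously $L_\mu > c_i/\eps$ and $|S_\mu| \ge 2$, where $L_\mu$ and $S_\mu$ denote respectively the load and job-set of $\mu$ in $\opt(I)$.

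For each such violating $\mu$ of type $i$, I would split its jobs into ``big'' jobs $B = \{j\in S_\mu : p_j > c_i\}$ and ``small'' jobs $T = S_\mu\setminus B$, and open new machines of type $i$ as follows. Each big job is placed on its own new machine (trivially satisfying the singleton branch of the condition). The small jobs are packed greedily into groups of total load at most $c_i/\eps$: we keep adding small jobs to the current group until the next one would push its load above $c_i/\eps$, then close the group and start a fresh one. Each small-job group is placed on its own new machine of type $i$, whose load is at most $c_i/\eps$ by construction and hence satisfies the load branch of the condition.

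For the cost accounting, the new machines created for big jobs collectively cost $\sigma_i\sum_{j\in B} p_j$, matching the contribution of these jobs to $cost_i(L_\mu)=\sigma_i L_\mu$ (we use $L_\mu > c_i$). The key observation is that any small-job group closed because the next small job (of size at most $c_i$) would overflow must have current load strictly greater than $c_i/\eps - c_i = c_i(1-\eps)/\eps$, which is at least $c_i$ for $\eps \le 1/2$. Hence every small-job group, except possibly the last one that is closed due to exhaustion, has load at least $c_i$ and cost equal to $\sigma_i$ times its load. The possibly underfilled last group contributes an additive overhead of at most $f_i = \sigma_i c_i$. Thus the total cost of the new machines replacing $\mu$ is at most $\sigma_i L_\mu + f_i$; using $L_\mu > c_i/\eps$ we get $f_i < \eps \cdot \sigma_i L_\mu$, so this cost is at most $(1+\eps) \cdot cost_i(L_\mu)$.

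Summing the bound across all violating machines and keeping the non-violating ones unchanged yields $cost(\sol') \le (1+\eps) \cdot cost(\opt(I)) \le (1+2\eps) \cdot cost(\opt(I))$, as required. The main technical step is the greedy-packing claim that there is at most a single $f_i$ overhead per violating machine, which is then absorbed into $\eps \cdot \sigma_i L_\mu$ using $L_\mu > c_i/\eps$; the rest is arithmetic. A minor subtlety is the implicit assumption $\eps \le 1/2$, under which a ``full'' small-job group automatically has load at least $c_i$; this is WLOG in any approximation scheme.
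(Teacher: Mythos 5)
Your proposal is correct and is a variant of the paper's argument, which also constructs a feasible constrained solution directly from $\opt(I)$ by replacing each violating machine with several machines of the same type. The key difference is in the splitting threshold and the resulting accounting. The paper declares a job ``big'' only if its size is at least $\frac{c_i}{\eps}$, and then repacks the remaining small jobs into bins of capacity $\frac{c_i}{\eps}$ using next-fit, bounding the number of bins via the standard pairing argument (any two consecutive next-fit bins exceed the capacity), which yields at most $\frac{2\eps L}{c_i}$ extra machines and a total multiplicative loss of $1+2\eps$. You instead use the threshold $c_i$ for big jobs, so the leftover small jobs all have size at most $c_i$; this makes every non-final greedy bin at least $c_i$-full (for $\eps\le 1/2$), so each of them has cost exactly $\sigma_i$ times its load, and the only fixed-cost overhead is the single $f_i$ from the last bin. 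That gives the tighter factor $1+\eps$, absorbed trivially into $1+2\eps$. The price you pay is the explicit assumption $\eps\le 1/2$ (the paper only needs $\eps<1$ so that $\frac{c_i}{\eps}>c_i$), but that is harmless inside an approximation scheme. Your argument also avoids the paper's case split into ``cost increased after extracting big jobs'' versus ``must further repack,'' which makes it a bit cleaner and self-contained. Both approaches are sound; yours is a minor simplification with a slightly sharper constant.
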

\begin{proof}
Assume that the assumption is not satisfied with respect to the set of jobs $J'$ that $\opt(I)$ assigns to $\mu$.  We  replace $\mu$ by a collection of machines of type $i$ and assign $J'$ to these machines such that the total cost of these machines is at most $(1+2\eps)$ times the cost of $\mu$ in $\opt(I)$.  Applying this transformation for each machine establish the claim of the lemma.

Consider $J'$, for every job of size at least $\frac{c_i}{\eps}$ we add a dedicated machine of type $i$ for this job and assign it to its dedicated machine.  Performing this step on all such jobs may increase the cost of the solution by at most an additive term of $f_i$ but this additive term is at most an $\eps$ fraction of the cost of $\mu$ in $\opt(I)$. Moreover, if an increase of the cost occurred it means that the resulting set of machines (the new dedicated machines as well as machine $\mu$) satisfy the conditions of the lemma.

If the conditions of the claim do not hold yet, then in particular it means that the remaining set of  jobs $J''\subseteq J'$ that were not assigned to dedicated machines (by the previous modification) have total size larger than  $\frac{c_i}{\eps}$.  Then, we pack these jobs into bins, each of which of capacity $\frac{c_i}{\eps}$ using the next-fit heuristic.  That is, we have an {\em open machine} of type $i$, and we process the jobs one by one. When we process job $j$, we try to assign it to the current open machine.  If the resulting set of jobs assigned to the open machine has total size at most $\frac{c_i}{\eps}$ we do so, and continue to the next job,  otherwise we close the open machine and open a new open machine of type $i$ and assign $j$ there.  If $L$ was the total size of jobs in $J''$ we use at most $\frac{2\eps L}{c_i}$ machines to pack all the jobs in $J''$, and this may increase the cost of the resulting solution by the total fixed cost of these machines, that is, by at most $\frac{2\eps L}{c_i}\cdot f_i = 2\eps L \sigma_i$ using \eqref{assump1}. This is at most $2\eps$ times the cost of assigning $J''$ to $\mu$ and the claim follows.
\qed
\end{proof}

Let $\zeta$ be such that $c_{\zeta}=\max_i c_i$.  Then, by Lemma \ref{sollem1}, we conclude that every job of size larger than $\frac{c_{\zeta}}{\eps}$ is allocated a dedicated machine.  For each such job, we find the type $i$ for which the resulting cost of the dedicated machine is minimized and we use such machine to process the job.  In this way, we  eliminate all jobs of size larger than $\frac{c_{\zeta}}{\eps}$.  In the remaining instance that we denote as $I'$, the load of every machine (in \sol) is at most $\frac{c_{\zeta}}{\eps}$ and for every collection of such jobs there is a type of machines such that if we assign it to such machine, then the resulting cost would be at most $\frac{c_{\zeta}}{\eps}\cdot \sigma_{\zeta} = \frac{f_{\zeta}}{\eps} \leq \frac{1}{\eps}$ where the inequality follows by Lemma \ref{fmaxlem}.

It suffices to construct an asymptotic approximation scheme for $I'$ where we modify the definition of the problem so that the load of every machine is at most  $\frac{c_{\zeta}}{\eps}$.  We next show that such a scheme was established by Epstein and Levin in \cite{EL17}.   To use the results of \cite{EL17}, we transform the instance $I'$ into an instance of {\em bin packing with bin utilization cost (BPUC)} for which \cite{EL17} designed an AFPTAS.  In BPUC we are given a monotonically
non-decreasing non-negative cost function $\pi$, where its domain contains the interval $[0,1]$, and a set of items $\{1,2,\ldots ,n\}$, where item $j$
has a non-negative size $s_j$ (such that $s_j \in [0,1]$ for all $j$). The goal is to partition the items
into subsets $S_1,\ldots ,S_m$ such that the total size of items
in each subset is at most $1$ and the cost, which is defined as
$\sum_{i=1}^m \pi (\sum_{j\in S_i} s_j)$, is minimized. 

In order to transform $I'$ into an instance of BPUC, we do the following.  The set of items is the set of jobs.  The size of item $j$ in the BPUC instance is $\frac{p_j}{c_{\zeta}/\eps}$ (that is, the fraction of a largest load of a machine in a solution that satisfies the additional constraint), and to define the bin utilization cost $\pi$ we do the following.  For $x\in [0,1]$, we set $\pi(x)$ to be
\begin{equation}\label{pieq}
\pi(x)= \min_{i=1,2,\ldots ,m} cost_i(x\cdot \frac{c_{\zeta}}{\eps}).
\end{equation}
Observe the following simple properties. First, $\pi$ is a monotone non-decreasing function as for every $i$  the cost function $cost_i$ is monotone non-decreasing. Second for every $x\in [0,1]$ we can evaluate $\pi(x)$ in polynomial time as we can evaluate $cost_i$ in constant time for every $i$. Last for every $y>0$ we can find a maximum value $x$ such that $\pi(x)\leq y$, since for every $i$ in constant time we can compute a maximum value of $x$ such that $cost_i(x) =y$ (if $y\geq f_i$ while if $y<f_i$ then there is no $x$ for which $cost_i(x) \leq y$).  These properties guarantee the assumptions used by \cite{EL17} to design their AFPTAS for BPUC.

The following theorem follows by the observation that partitioning the jobs of $I'$ to subsets according to the solution obtained for BPUC and then choosing for each subset the type of machine that minimizes the cost of assigning the jobs to that machine type, results in a solution for $I'$ of the same cost as the cost of the solution for BPUC.  This holds also in the other direction if we are given an optimal solution for $I'$ we obtain an optimal solution for the input of BPUC.  Thus, $I'$ is equivalent to the instance we created for BPUC which proves the following theorem.
  
\begin{theorem}
There is an AFPTAS for \prc.
\end{theorem}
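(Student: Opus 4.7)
The plan is to combine the preceding reductions with a direct invocation of the AFPTAS of \cite{EL17} for BPUC. First I would verify that the reduction from $I'$ to the constructed BPUC instance is cost-equivalent in both directions. Given any BPUC packing with bins $B_1,\ldots,B_k$ satisfying $\sum_{j\in B_l} s_j \le 1$, for each bin $B_l$ I instantiate one machine of the type $i(l)$ that achieves the minimum in \eqref{pieq} at the value $\sum_{j\in B_l} s_j$; this produces a feasible solution of $I'$ whose total cost equals the BPUC cost, since the load of each instantiated machine is at most $c_\zeta/\eps$. Conversely, any feasible solution of $I'$ in which each machine has load at most $c_\zeta/\eps$ induces a packing of the same items into bins (one bin per used machine), and the $cost_i$ of each machine is at least $\pi$ evaluated at the corresponding normalized load by \eqref{pieq}, so the BPUC cost of the induced packing is no larger. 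Consequently the optima of $I'$ and of the constructed BPUC instance coincide.

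Next I would apply the AFPTAS of \cite{EL17} to this BPUC instance. Its assumptions (monotone non-decreasing $\pi$, polynomial-time evaluation, and polynomial-time inversion of $\pi$) are exactly the properties established after equation \eqref{pieq}. The AFPTAS returns, in time polynomial in $n$ and $1/\eps$, a packing of cost at most $(1+\eps)\cdot \opt_{BPUC}+g(1/\eps)$ for a computable function $g$. Translating this packing back via the correspondence described above yields a feasible solution of $I'$ of cost at most $(1+\eps)\cdot cost(\opt(I'))+g(1/\eps)$.

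Finally, I would reattach the preprocessing that was stripped off before forming $I'$: the dedicated machines carrying each job of size larger than $c_\zeta/\eps$ (of cost at most $1/\eps$ per such job, by the discussion before \eqref{pieq}), and the factor $(1+2\eps)$ coming from restricting attention to solutions satisfying the hypothesis of Lemma \ref{sollem1}. Composing these yields a solution to the original normalized \prc\ instance of cost at most $(1+\eps)(1+2\eps)\cdot cost(\opt(I))+g(1/\eps)$, which after rescaling $\eps$ is the promised AFPTAS guarantee. Using Lemma \ref{fmaxlem} in reverse then returns a solution to the original (un-normalized) input with the same multiplicative-plus-additive guarantee.

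The main delicate point is ensuring that the additive term remains a function of $1/\eps$ alone. This uses the normalization $\max_i f_i=1$ from Lemma \ref{fmaxlem}, together with $\pi(1)\le cost_\zeta(c_\zeta/\eps)=f_\zeta/\eps\le 1/\eps$, which bounds the per-bin cost in $[0,1]$ by a function of $1/\eps$; the AFPTAS of \cite{EL17} then produces an additive loss bounded purely in terms of $1/\eps$, as required by the definition of an AFPTAS.
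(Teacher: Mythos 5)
Your proof follows essentially the same route as the paper: construct the BPUC instance with normalized item sizes and the pointwise-minimum cost function $\pi$ of \eqref{pieq}, verify the two-way cost equivalence between $I'$ and BPUC, invoke the AFPTAS of \cite{EL17}, and compose with the preprocessing reductions (Lemmas \ref{fmaxlem} and \ref{sollem1}, plus dedicated machines for the very large jobs). One small slip worth flagging: the parenthetical claim that each dedicated machine for a job of size larger than $c_\zeta/\eps$ costs at most $1/\eps$ is false. For such a job the cheapest dedicated machine costs $\min_i\bigl(f_i+\sigma_i(p_j-c_i)\bigr)=\min_i \sigma_i p_j = p_j$ (using $f_i=\sigma_i c_i$ and $\min_i\sigma_i=1$), which is unbounded since $p_j$ itself is unbounded. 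Fortunately this bound is not needed: the dedicated machines' total cost is simply charged against what the near-optimal restricted solution $\sol$ of Lemma \ref{sollem1} pays for those same jobs, and the $(1+2\eps)$ factor of that lemma already absorbs the loss. The bound $\pi(1)\le c_\zeta\sigma_\zeta/\eps = f_\zeta/\eps\le 1/\eps$, which you also cite, is what keeps the additive term of the BPUC AFPTAS a function of $1/\eps$ alone under the normalization $\max_i f_i=1$; it says nothing about the dedicated machines.
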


\bibliographystyle{abbrv}

\end{document}